\documentclass[11pt]{article}
\usepackage{fullpage}
\usepackage{amsthm,amsfonts}
\usepackage{amsmath}
\usepackage{amsthm}
\usepackage{listings} 
\usepackage{xcolor}   
\usepackage{caption}
\usepackage{tikz}
\usetikzlibrary{arrows.meta}

\lstset{
    language=Python,          
    basicstyle=\ttfamily,     
    keywordstyle=\color{blue},
    stringstyle=\color{red},  
    commentstyle=\color{green}, 
    morecomment=[l][\color{magenta}]{\#},  
    breaklines=true,          
    numbers=left,             
    numberstyle=\tiny,        
    frame=single,             
    captionpos=b,             
    escapeinside={\%*}{*)}    
}

\usepackage{algorithm}
\usepackage{algpseudocode}
\usepackage{amssymb}
\usepackage{color}
\usepackage{mathrsfs}
\usepackage{enumitem}
\usepackage{bm}
\usepackage{multirow}
\usepackage{booktabs}
\usepackage{makecell}
\usepackage{graphicx}
\usepackage{comment}
\usepackage{cases}
\usepackage{appendix}
\usepackage{tikz}
\usetikzlibrary{arrows,shapes}
\usepackage{marginnote}
\usepackage{tcolorbox}
\usepackage[qm]{qcircuit}
\usepackage{braket}
\usepackage{url}

\colorlet{color1}{blue}
\colorlet{color2}{red!50!yellow}

\usepackage{hyperref}[6.83]
\hypersetup{
  colorlinks=true,
  frenchlinks=false,
  pdfborder={0 0 0},
  naturalnames=false,
  hypertexnames=false,
  breaklinks,
  allcolors = color1,
  urlcolor = color2,
}

\usepackage{cleveref}

\numberwithin{equation}{section}

\usepackage{graphicx}
\usepackage{anyfontsize}
\usepackage{amsmath}
\usepackage{xcolor}
\usepackage{subcaption}
\usepackage{caption}
\usepackage{graphicx}
\usepackage{booktabs}
\usepackage{pifont}
\usepackage{amssymb}
\usepackage{algorithm}
\usepackage{algpseudocode}
\usepackage{amsfonts}
\usepackage{mathrsfs}

\newtheorem{theorem}{Theorem}
\newtheorem{lemma}[theorem]{Lemma}
\newtheorem{proposition}[theorem]{Proposition}

\newtheorem{definition}[theorem]{Definition}
\newtheorem{result}[theorem]{Result}

\allowdisplaybreaks
\title{Contour-integral based quantum eigenvalue transformation: analysis and applications} 

\author{Shan Jiang$^{1}$, \quad Dong An$^{2}$ \\ 
\footnotesize $^{1}$ School of Mathematical Sciences, Peking University, Beijing, China\\
\footnotesize $^{2}$ Beijing International Center for Mathematical Research, Peking University, Beijing, China\\
}

\date{ }

\begin{document}

\maketitle

\begin{abstract}
    Eigenvalue transformations appear ubiquitously in scientific computation, ranging from matrix polynomials to differential equations, and are beyond the reach of the quantum singular value transformation framework. 
    In this work, we study the efficiency of quantum algorithms based on contour integral representation for eigenvalue transformations from both theoretical and practical aspects. 
    Theoretically, we establish a complete complexity analysis of the contour integral approach proposed in [Takahira, Ohashi, Sogabe, and Usuda. Quant. Inf. Comput., 22, 11\&12, 965--979 (2021)]. 
    Moreover, we combine the contour integral approach and the sampling-based linear combination of unitaries to propose a quantum algorithm for estimating observables of eigenvalue transformations using only $3$ additional qubits. 
    Practically, we design contour integral based quantum algorithms for Hamiltonian simulation, matrix polynomials, and solving linear ordinary differential equations, and show that the contour integral algorithm can outperform all the existing quantum algorithms in the case of solving asymptotically stable differential equations. 
\end{abstract}

\tableofcontents

\section{Introduction}

Computing matrix functions is a crucial task in scientific computation. 
On the one hand, many matrix functions, including matrix inverse, matrix polynomials, and matrix exponential, have wide applications in various disciplines. 
On the other hand, matrix functions usually serve as a subroutine in iterative methods for solving non-linear equations and optimization algorithms. 
As computing matrix functions is typically the most costly step in the algorithm, efficient matrix function algorithms have a significant impact on accelerating the computation. 
In practice, the matrix is usually of high dimension. 
Such high dimensionality occurs ubiquitously when we use the model with a large degree of freedom for accuracy. 
This poses significant computational challenges because classical algorithms usually scale at least linear in terms of the dimension. 

To overcome the issue of high dimensionality, recent works turn to explore the power of quantum computers, which are expected to perform computational tasks by means of quantum superposition and entanglement, a very different way from classical computers. 
During the past decades, there have been many efficient quantum algorithms for certain types of matrix functions, such as matrix inverse (i.e., solving linear systems of equations)~\cite{HarrowHassidimLloyd2009,ChildsKothariSomma2017,SubasiSommaOrsucci2019,AnLin2022,CostaAnSandersEtAl2022,Dalzell2024,Li2025} and matrix exponentials (i.e., solving linear ordinary differential equations)~\cite{Berry2014,BerryChildsOstranderWang2017,ChildsLiu2020,Krovi2022,FangLinTong2023,BerryCosta2022,JinLiuYu2022,AnLiuLin2023,AnChildsLin2023,ShangGuoAnZhao2025,LowSomma2025}. 
Under a strong and somewhat debatable assumption that the input matrix has been efficiently encoded as specific quantum data structures, these quantum algorithms can implement the desired matrix functions with cost at most poly-logarithmic in dimension, offering a potential exponential speedup over the classical algorithms. 

A remarkable quantum algorithm among others is the quantum singular value transformation (QSVT)~\cite{GilyenSuLowEtAl2019}. 
As its name suggests, QSVT provides a uniform framework to implement any matrix function which can be expressed as a \emph{singular value transformation}. 
QSVT is both time- and space-efficient -- its overall query complexity does not have explicit large dimension dependence, and it only requires one additional ancilla qubit. 
Due to its versatility and high efficiency, QSVT has been widely applied to solve Hamiltonian simulation problem, matrix powers, Gibbs state preparation, eigenstate preparation, to name a few, and becomes a framework comprising major quantum algorithms~\cite{MartynRossiTanEtAl2021}. 

While certain matrix functions of practical interest, such as matrix inverse, can be expressed as singular value transformations, many of them cannot. 
A prototypical example is the matrix polynomial, which is intrinsically an \emph{eigenvalue transformation} rather than a singular value transformation if the matrix is not normal, because eigenvalue decomposition of a non-normal matrix is different from its singular value decomposition. 
Another significant instance of eigenvalue transformations is the matrix exponential that arises in solving differential equations. 
These problems are beyond the reach of QSVT, so designing efficient quantum eigenvalue transformation algorithms becomes the key to further unlocking the potential of quantum algorithms for high-dimensional scientific computation. 

In this work, we study the efficiency of quantum algorithms based on contour integral representation for eigenvalue transformation problems from both theoretical and practical aspects. 
Theoretically, we establish a detailed and complete complexity analysis of the quantum contour integral approach proposed in~\cite{TakahiraOhashiSogabeEtAl2020,TakahiraOhashiSogabeEtAl2021}. 
Our complexity analysis can be readily applied to holomorphic functions over an arbitrary domain that encompasses all the eigenvalues of the matrix. 
Moreover, we combine the contour integral approach and the recently proposed sampling-based linear combination of unitaries (LCU) technique~\cite{Chakraborty2024,WangMcArdleBerta2024} to propose a qubit-efficient quantum eigenvalue transformation algorithm, which only requires $3$ additional ancilla qubits to estimate an observable of an eigenvalue transformation. 

Practically, we design contour integral based quantum algorithms for applications including Hamiltonian simulation, matrix polynomials, and solving linear ordinary differential equations. 
Remarkably, we show that quantum contour integral approach can solve linear differential equations with optimal precision dependence. 
Furthermore, when the differential equations satisfy the asymptotically strictly stable condition, we show that the contour integral approach can outperform all the existing quantum algorithms.

\subsection{Problem statement}

Let $\ket{\psi}$ be an $N$-dimensional input state and $A$ is an $N \times N$ matrix. 
Our aim is to design a quantum algorithm to approximate $f(A)\ket{\psi}$ for a complex function $f(z)$. Suppose that \( \alpha \geq \|A\| \), and \( U_A \) is an $(\alpha, a, 0)$-block-encoding of the matrix \( A \), which means \( U_A = \begin{bmatrix} A/\alpha & \cdot \\ \cdot & \cdot \end{bmatrix} \) uses $a$ ancilla qubits. Assume there is an oracle $U_\psi$ such that $U_\psi \ket{0} = \ket{\psi}$. Using $U_A$ and $U_\psi$, we define the problem for matrix functions as follows. 

\begin{definition}[Quantum matrix function problem]
\label{definition:main_problem}
    Let $A \in \mathbb{C}^{N \times N}$ be an $N \times N$ matrix with block-encoding $U_A$ and let $\ket{\psi}$ be a quantum state representing an $N$-dimensional complex unit vector. For matrix function $f(A)$ and state $\ket{\psi}$, we define quantum state $|f\rangle$ as $|f\rangle=f(A)\ket{\psi} / \| f(A)\ket{\psi} \|$. Then, for some positive constant $\epsilon$, the problem is to output a quantum state $|\tilde{f}\rangle$ such that
$$
\left\||f\rangle-|\tilde{f}\rangle \right\| \leq \epsilon
$$
with a probability of at least $1 / 2$, where $0 \leq \epsilon \leq 1 / 2$.
\end{definition}

In some cases, we do not need to obtain the vector $\ket{f}$, but only its value under a certain observation $O$. 
That is, we would hope to get an estimation of $\bra{\psi}f^\dagger(A) Of(A)\ket{\psi}$. At this point, we define the following problem. 

\begin{definition}[Quantum matrix function estimation problem]
\label{definition:estimation_problem}
Let $A \in \mathbb{C}^{N \times N}$ be an $N \times N$ matrix with block-encoding $U_A$ and let $\ket{\psi}$ be a quantum state representing an $N$-dimensional complex vector. 
Let $O \in \mathbb{C}^{N \times N}$ be a Hermitian matrix. 
For some positive constant $\epsilon, \delta$, the problem is to output $\mu$ such that
$$
\left|\mu-\bra{\psi}f^\dagger(A) Of(A)\ket{\psi}\right| \leq \epsilon
$$
with probability at least $1-\delta$.
\end{definition}

\subsection{Main results}

\subsubsection{Complexity analysis of the quantum contour integral algorithm}

The quantum contour integral algorithm is based on Cauchy's integral formula
\begin{equation}
    f(A)=\frac{1}{2 \pi \mathrm{i}} \oint_{\Gamma} f(z)\left(z I_N-A\right)^{-1} \mathrm{~d} z,
    \label{function:f(A)}
\end{equation}
where $\Gamma$ is a contour in the complex plane encompassing the eigenvalues of $A$. 
To design a quantum algorithm, we can first apply any quadrature rule to discretize the integral into a finite sum. 
Then, the quantum algorithm proposed in~\cite{TakahiraOhashiSogabeEtAl2021} first implements the matrix inverse $(zI-A)^{-1}$ by QSVT, then performs quantum LCU~\cite{ChildsWiebe2012} to compute the finite sum from the integral discretization to approximate the desired matrix function. 

Our first main result is the complexity estimate of this quantum algorithm. 

\begin{result}[Informal version of Theorem~\ref{theorem:general_complexity}]\label{res:complexity_general}
    The quantum contour integral algorithm can solve the problem defined by Definition \ref{definition:main_problem} using
        $\widetilde{\mathcal{O}}\left(\frac{Bl\gamma^2 \alpha }{\left\|f(A)\ket{\psi}\right\|}\mathrm{log}\left(\frac{1}{\epsilon}\right)  \right)$
     queries to the input model of the matrix $A$, and 
    $\mathcal{O}\left(\frac{Bl\gamma}{\left\|f(A) \ket{\psi}\right\|}\right)$
     queries to the input model of $\ket{\psi}$, and 
    $\mathcal{O}\left(\mathrm{log}(\frac{(B\gamma+L)l}{\|f(A)\ket{\psi}\|\epsilon})\right)$
     additional ancilla qubits. 
     Here $\alpha \geq \|A\|$, $B$ is the maximum value of $|f(z)|$ on $\Gamma$, $L$ is the maximum value of $|f'(z)|$ on $\Gamma$, $l$ is the length of $\Gamma$, and $\gamma \geq \|(zI-A)^{-1}\|$ for all $z \in \Gamma$. 
\end{result}

Result~\ref{res:complexity_general} applies to any holomorphic function $f(z)$ and any contour $\Gamma$ that includes the eigenvalues of $A$. 
Due to its generality, Result~\ref{res:complexity_general} unavoidably involves many technical parameters which make it hard to understand the efficiency of the contour integral approach intuitively. 
We will show more explicit complexity estimates in terms of physical parameters for specific applications later. 

\subsubsection{Qubit-efficient quantum eigenvalue transformation algorithm}

Result~\ref{res:complexity_general} shows that the quantum contour integral algorithm needs logarithmically many ancilla qubits, due to the usage of the quantum LCU subroutine. 
For the problem of observable estimation in Definition~\ref{definition:estimation_problem}, we propose a hybrid quantum-classical algorithm which significantly reduces the number of ancilla qubits to constant. 
Our algorithm combines the contour integral representation and the sampling-based LCU in~\cite{Chakraborty2024}, while a similar idea has been used to design qubit-efficient hybrid quantum-classical algorithms for different tasks~\cite{WanBertaCampbell2022,DongLinTong2022,AnLiuLin2023,WangMcArdleBerta2024,WatsonWatkins2024}. 
Specifically, as the matrix function can be approximated by the linear combination of matrix inverses, the observable $\bra{\psi}f^\dagger(A) Of(A)\ket{\psi}$ can also be approximated by the linear combination of observables that only involve $O$ and two matrix inverses in the form of $(zI-A)^{-1}$ with different values of $z$. 
Then, we can use classical computers to sample the values of $z$ with probability related to the coefficients in the linear combination, independently estimate the sampled observable by QSVT (for matrix inverse) and Hadamard test (for observable estimation of $O$) on quantum computers, and take the sum of the sampled values classically. 

The complexity of our algorithm is given in the following result, which shows that our algorithm only requires constantly many ancilla qubits, while having worse overall query complexity compared to the purely quantum contour integral algorithm. 

\begin{result}[Informal version of Theorem~\ref{theorem:Ancilla-efficient_complexity}]

The hybrid quantum-classical contour integral algorithm can solve the problem defined by Definition \ref{definition:estimation_problem} using $\widetilde{\mathcal{O}} \left( \frac{\|O\|^2 B^4 l^4 \gamma^5 \alpha }{\epsilon^2 } \right)$ queries to the input model of $A$, $\widetilde{\mathcal{O}}\left( \frac{\|O\|^2 (Bl\gamma)^4}{\epsilon^2  } \right)$ queries to the input model of $\ket{\psi}$, and $3$ additional ancilla qubits. 
Here the parameters are defined the same as in Result~\ref{res:complexity_general}. 
\end{result}

\subsubsection{Applications}

To understand the complexity of the quantum contour integral algorithm in a more explicit way and find its potential advantage over other algorithms, we apply the quantum contour integral algorithm to three cases: Hamiltonian simulation problem, matrix polynomials, and solving linear differential equations. 

\paragraph{Hamiltonian simulation. }The task of the Hamiltonian simulation problem is to implement a unitary $\mathrm{e}^{-\mathrm{i}HT}$. 
Here $T > 0$ is the evolution time, and $H$ is the Hamiltonian, which is a Hermitian matrix. 
Without loss of generality, we assume $\|H\| = 1$, as we can rescale the time to absorb the size of the Hamiltonian into the evolution time. 

We can regard Hamiltonian simulation as an eigenvalue transformation $f(A)$, where $f(z) = \mathrm{e}^{Tz}$ and $A = - \mathrm{i}H$. 
Notice that all the eigenvalues of $A$ lie in the interval $[-\mathrm{i} , \mathrm{i} ]$ on the imaginary axis, so we can choose the contour $\Gamma$ to be rectangular centered at the origin, with width $2/T$ and height $2+2/T$. 
Then, Result \ref{res:complexity_general} shows that the complexity of the contour integral approach is $\widetilde{\mathcal{O}}(T^2\log(1/\epsilon))$. 

Since $H$ is a Hermitian matrix, we can also view $\mathrm{e}^{-\mathrm{i}HT}$ as a singular value transformation of $H$ and apply QSVT. 
In fact, QSVT gives optimal Hamiltonian simulation algorithm with query complexity $\mathcal{O}(T+\log(1/\epsilon))$. 
Therefore, the contour integral approach is quadratically worse than the state of the art in time $T$, so the Hamiltonian simulation example only serves as a prototype application to demonstrate the effectiveness of the contour integral algorithm. 

\paragraph{Matrix polynomials. }Let $f(z)$ be a polynomial of $z$, and we would like to implement the matrix polynomial $f(A)$, which is naturally an eigenvalue transformation. 
For simplicity, we assume $A = S D S^{-1}$ is diagonalizable and let $\kappa_S = \|S\|\|S^{-1}\|$ denote the condition number of $S$.  
We can choose the contour to be a circle centered at the origin with radius larger than the spectral radius of $A$. 
Then, Result \ref{res:complexity_general} shows that the complexity of the contour integral approach is 
\begin{equation}
    \widetilde{\mathcal{O}}\left( \frac{B\kappa_S^2}{\|f(A)\ket{\psi}\|} \log\left( \frac{1}{\epsilon} \right)\right). 
\end{equation}
Here again $B$ is the maximum value of the polynomial $|f(z)|$ over the contour. 
Compared to the existing quantum eigenvalue transformation algorithm with block preconditioning improvement~\cite{LowSu2024quantumlinearalgorithmoptimal}, the contour integral algorithm removes the explicit polynomial degree dependence (See~\cref{compare:Matrix_Polynomial} for a detailed comparison). 

\paragraph{Linear differential equations. }We consider the homogeneous linear ordinary differential equation 
\begin{equation}
    \frac{\mathrm{d}}{\mathrm{d}t}x(t)=A x(t), \quad x(0)=\ket{\psi}.
\end{equation}
Its solution is given by $x(T) = \mathrm{e}^{AT} \ket{\psi}$, so we can solve the linear differential equation by implementing the eigenvalue transformation $f(A) = \mathrm{e}^{AT}$. 
It has been proved that generic quantum algorithms scale exponentially in the evolution time $T$ when solving unstable equations~\cite{AnLiuWangEtAl2023}. 
Therefore, to construct efficient algorithm, existing works always assume certain stable conditions. 
In this work, we consider two types of stable conditions. 

The first case is that $A = SDS^{-1}$ is diagonalizable and all its eigenvalues have non-positive real parts. 
To implement the contour integral algorithm, part of the contour that includes all the eigenvalues would unavoidably lies in the right half complex plane, since some of the eigenvalues are possibly imaginary. 
Then, Result \ref{res:complexity_general} shows that the complexity of the contour integral approach is 
\begin{equation}
\widetilde{\mathcal{O}}\left(\frac{\kappa_S^2 T^2}{\|x(T)\|} \log\left( \frac{1}{\epsilon} \right) \right), 
\end{equation}
which achieves near-optimal precision dependence (i.e., almost linear in $\log(1/\epsilon)$) but becomes sub-optimal in time $T$.

The second case is that $A = SDS^{-1}$ is diagonalizable and all its eigenvalues have strictly negative real parts. 
As all the eigenvalues are in the left half plane, we can now choose the contour to be a half circle in the left half plane combined with part of the imaginary axis. 
Then, Result \ref{res:complexity_general} shows that the complexity of the contour integral approach is only 
\begin{equation*}
    \widetilde{\mathcal{O}}\left( \frac{\kappa_S^2}{\|x(T)\|} \log\left(\frac{1}{\epsilon}\right)\right), 
\end{equation*}
which achieves a fast-forwarded scaling which does not have explicit dependence on the evolution time $T$. 
Our result is the first one to eliminate such explicit time dependence under the assumptions on the eigenvalues and outperforms existing generic and fast-forwarded quantum algorithms for differential equations~\cite{JenningsLostaglioLowrieEtAl2024}. 
See~\cref{compare:ODE} for a detailed comparison.

\subsection{Related works}

The idea of leveraging contour integral representation to implement eigenvalue transformations was proposed in~\cite{TakahiraOhashiSogabeEtAl2020} and further improved in~\cite{TakahiraOhashiSogabeEtAl2021}. 
Besides, the works~\cite{TongAnWiebeEtAl2021,FangLinTong2023} also apply contour integral approach to some specific cases including solving differential equations and Gibbs state preparation, and a recent work~\cite{WangLiuXueEtAl2025} constructs a novel matrix decomposition identity based on Cauchy's residue theorem and designs an algorithm for simulating non-unitary dynamics and non-Hermitian matrix polynomials. 
The quantum contour integral approach studied in our work shares the same idea as in~\cite{TakahiraOhashiSogabeEtAl2021}, but our work gives a more complete complexity analysis which includes the integral discretization error and expresses the scalings explicitly in terms of the properties of the function to be implemented. 
Furthermore, we propose a qubit-efficient version of the algorithm, discuss the applications and find a case where the contour integral approach becomes the state of the art, which extends the scope of~\cite{TakahiraOhashiSogabeEtAl2021}. 

Besides the contour integral approach, there are two other quantum eigenvalue transformation algorithms: quantum eigenvalue processing (QEP) and Laplace transform based linear combination of Hamiltonian simulations (Lap-LCHS). 

QEP was proposed in~\cite{LowSu2024}, and the idea is to construct matrix Chebyshev polynomials by implementing its generating function with quantum linear system algorithms. 
In a recent work~\cite{LowSu2024quantumlinearalgorithmoptimal}, its query complexity to initial state preparation has been improved to be optimal by an improved linear system algorithm and the block-preconditioning technique. 
Compared to QEP, the contour integral approach has worse state preparation cost. 
However, the total complexity of the contour integral approach in terms of precision could be better than QEP. 
Specifically, to implement a matrix function other than polynomials, QEP needs to first approximate the desired function by Chebyshev polynomials and then implement the corresponding polynomials. 
The total complexity of QEP is $\mathcal{O}(d\log(1/\epsilon))$, where $d$ is the degree of the polynomial and is likely to contain extra $\epsilon$ dependence. 
On the other hand, the contour integral approach scales $\mathcal{O}(\log(1/\epsilon))$, because it directly approximates the function by the linear combination of matrix inverses, and the only step that introduces computational overhead is to implement the matrix inverses, which scales linearly in $\log(1/\epsilon)$ if we use QSVT. 

Lap-LCHS was proposed in~\cite{AnChildsLinying2024}, and the idea is to use Laplace transform to represent a matrix function as a linear combination of matrix exponentials, then further expand it as a linear combination of Hamiltonian simulation problems. 
Then the matrix function can be implemented by Hamiltonian simulation algorithms and LCU. 
Lap-LCHS can also achieve optimal state preparation cost in general, and it allows a qubit-efficient implementation as well by using sampling-based LCU. 
However, Lap-LCHS requires the function to have an inverse Laplace transform which is further needed to be integrable. 
This restricts the applicable regime of Lap-LCHS and excludes useful cases such as matrix polynomials. 
As a comparison, the contour integral approach only poses holomorphic condition on the function and can be more versatile.

\subsection{Discussions}

In this work, we study quantum algorithms for implementing matrix eigenvalue transformations based on contour integral representation. 
We establish a general complexity analysis of the quantum contour integral approach, and introduce a qubit-efficient way to estimate the observable with respect to the transformed state. 
We apply our analysis to solving Hamiltonian simulation problem, implementing matrix polynomials, and solving linear differential equations. 
We find that the contour integral approach can outperform all existing algorithms for solving linear asymptotically strictly stable differential equations by eliminating the explicit time dependence. 

Future directions include exploring more applications of the contour integral approach and constructing other quantum eigenvalue transformation algorithms with improved efficiency. 
Another promising direction is to explore whether the contour integral representation could be useful for designing a quantum algorithm to implement multi-matrix functions, which will be discussed in our follow-up work soon.

\subsection{Organization}

The rest of this work is organized as follows. 
In Section \ref{sec:prelim}, we briefly introduce some preliminary results needed in our analysis. 
Then we discuss the quantum and hybrid quantum-classical algorithms in Section \ref{sec:algorithm}, and give their complexity analysis in Section \ref{sec:complexity}. 
We show the applications in Section \ref{sec:applications}.

\section{Preliminaries}\label{sec:prelim}

In this section, we first introduce Cauchy's integral formula for matrix functions and discuss its numerical discretization. 
Then we introduce some basic quantum algorithms we will use later. 

\subsection{Cauchy’s integral and numerical discretization}\label{section:Approximation by Cauchy’s integral}

Let $\Gamma$ be a closed contour in the complex plane that encloses all the eigenvalues of matrix $A$, and
let $f$ be an holomorphic function inside $\Gamma$. 
Then, using Cauchy’s integral formula, matrix function $f(A)$ can be described as in Eq.~\eqref{function:f(A)}. 
This gives us a way to approximate the matrix function $f(A)$ using numerical discrete approximations of integrals.

Normally, Eq. \eqref{function:f(A)}  holds if and only if $\Gamma$ is a curve enclosing all the eigenvalues of $A$. 
Let $\rho(A)$ be the spectrum ratios of $A$. Because of $\rho(A)\leq \|A\|$, a circle with radius $\|A\|$ can cover all eigenvalues. But if $\Gamma$ is a circle with radius $\beta>\|A\|$, $f(z)$ may performs well around the eigenvalues of $A$, but performs poorly on disk $\{z:\|z\|\leq \beta\}$. Therefore, we want to relax $\Gamma$ to an arbitrary curve that covers all eigenvalues of $A$, and no longer assume the information of $f(z)$ outside $\Gamma$. 

Assume that $\Gamma$ is a continuous simple closed curve covering all eigenvalues of $A$. $f(z)$ is holomorphic inside  $\Gamma$ and $L$-Lipschitz continuous inside and on $\Gamma$. Let $B$ be the maximum value of $|f(z)|$ on $\Gamma$, and $l$ be the length of $\Gamma$. 
Then $\Gamma$ can be given by the parametric equation $z = z(t)$, $0 \leq t \leq l$, while $|z'(t)| = 1$ for any $t\in [0, l]$. Therefore, by change of variable, we have 
\begin{equation}
    f(A)=\frac{1}{2 \pi \mathrm{i}} \int_{\Gamma} f(z)\left(z I_N-A\right)^{-1} \mathrm{~d} z = \frac{1}{2 \pi \mathrm{i}} \int_{0}^l f(z(t))\left(z(t) I_N-A\right)^{-1} \mathrm{e}^{\mathrm{i} \theta(t)}\mathrm{~d} t =\int_0^{l} h(t) \mathrm{d} t,
    \label{function:f(A)_general}
\end{equation}
where $h(t) = \frac{1}{2 \pi \mathrm{i}} f(z(t))\left(z(t) I_N-A\right)^{-1} \mathrm{e}^{\mathrm{i} \theta(t)}$, and $ \mathrm{e}^{\mathrm{i} \theta(t)} = \frac{\mathrm{d}}{\mathrm{d}t}z(t)$. 

Next, we approximate $f(A)$ by its Riemann sum $f_M(A)$ defined as 
\begin{equation}
f_M(A) = \sum_{k=0}^{M-1}\frac{l}{2 \pi \mathrm{i} M} f(z_k)\left(z_k I_N-A\right)^{-1} \mathrm{e}^{\mathrm{i} \theta_k}, 
\label{function:f_M(A)_general}
\end{equation}
where $M$ is the number of nodes in the discretization, and $t_k = \frac{lk}{M}$, $z_k = z(t_k)$ and $\theta_k =\theta(t_k)$ for $k = 0,1,\cdots, M$. 
The error of $f_M(A)$ approaching $f(A)$ can be bounded by the following proposition.

\begin{proposition}\label{proposition:general_discrete_error}

Suppose that $f(z)$ is holomorphic inside $\Gamma$ and $L$-Lipschitz continuous inside and on $\Gamma$. Let $B$ be the maximum value of $|f(z)|$ on $\Gamma$, and assume that $\left\|(z(t) I_N -A)^{-1}\right\| \leq \gamma$ for any $t \in [0, l]$. 
Then, for the matrix function $f(A)$ in Eq.~\eqref{function:f(A)_general} and the approximation $f_M(A)$ in Eq.~\eqref{function:f_M(A)_general}, we have 
\begin{equation}\label{general_discrete_error}
    \left\|f(A)-f_M(A)\right\| \leq \frac{(B\gamma^2+B\gamma+L\gamma)l^2}{8\pi M}.
\end{equation}
\end{proposition}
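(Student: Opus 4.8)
The plan is to bound the error $\|f(A) - f_M(A)\|$ by controlling the error of approximating the integral $\int_0^l h(t)\,\mathrm{d}t$ by its left-endpoint Riemann sum $\frac{l}{M}\sum_{k=0}^{M-1} h(t_k)$. The standard quadrature estimate says that for a function $h$ that is Lipschitz on $[0,l]$ with constant $C_h$, the left-endpoint rule has error at most $\frac{C_h l^2}{2M}$; a slightly sharper constant, which seems to be what the $\frac{1}{8\pi}$ in the statement is reaching for, comes from pairing up the interval in a more careful midpoint-style argument, but the crude bound is the conceptually right starting point. So the main task reduces to estimating the Lipschitz constant of $h(t) = \frac{1}{2\pi\mathrm{i}} f(z(t))(z(t)I_N - A)^{-1} e^{\mathrm{i}\theta(t)}$ on $[0,l]$, i.e. bounding $\|h(s) - h(t)\|$ in terms of $|s-t|$.

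First I would write $h$ as a product of three factors — the scalar $f(z(t))$, the resolvent $R(t) := (z(t)I_N - A)^{-1}$, and the phase $e^{\mathrm{i}\theta(t)}$ — and apply the product rule for Lipschitz bounds: $\|h(s)-h(t)\| \le \frac{1}{2\pi}\big( |f(z(s))-f(z(t))|\,\|R(s)\|\,|e^{\mathrm{i}\theta(s)}| + |f(z(t))|\,\|R(s)-R(t)\|\,|e^{\mathrm{i}\theta(s)}| + |f(z(t))|\,\|R(t)\|\,|e^{\mathrm{i}\theta(s)}-e^{\mathrm{i}\theta(t)}| \big)$. Each of the three terms is then handled separately. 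For the first term, since $|z'(t)| = 1$ the map $t \mapsto z(t)$ is $1$-Lipschitz, so $|f(z(s)) - f(z(t))| \le L|z(s)-z(t)| \le L|s-t|$ by the $L$-Lipschitz hypothesis on $f$, and $\|R(s)\| \le \gamma$; this contributes $L\gamma$ to the Lipschitz constant (up to the $1/2\pi$). For the third term, $|f(z(t))| \le B$ on $\Gamma$, $\|R(t)\| \le \gamma$, and $|e^{\mathrm{i}\theta(s)} - e^{\mathrm{i}\theta(t)}|$ needs a bound; here I would use that $e^{\mathrm{i}\theta(t)} = z'(t)$ so this is $|z'(s) - z'(t)|$, and — this is the one place a hidden smoothness assumption is being used — a curvature/arclength argument on a rectifiable curve gives $|z'(s)-z'(t)| \le |s-t|$ when we traverse at unit speed (or one simply assumes $\theta$ is $1$-Lipschitz, which holds for the piecewise-smooth contours actually used in the applications); this contributes $B\gamma$.

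The second term is the main obstacle: I need $\|R(s) - R(t)\|$. Using the resolvent identity $R(s) - R(t) = R(s)\big((z(t)-z(s))I_N\big)R(t) = (z(t)-z(s))\,R(s)R(t)$, I get $\|R(s)-R(t)\| \le |z(s)-z(t)|\,\|R(s)\|\,\|R(t)\| \le \gamma^2 |s-t|$, using $|z(s)-z(t)| \le |s-t|$ again. Combined with $|f(z(t))| \le B$, this contributes $B\gamma^2$. Adding the three contributions gives Lipschitz constant $\frac{1}{2\pi}(L\gamma + B\gamma + B\gamma^2) = \frac{B\gamma^2 + B\gamma + L\gamma}{2\pi}$ for $h$, and feeding this into the Riemann-sum error bound $\frac{C_h l^2}{2M}$ — or, with the sharper midpoint-pairing constant, $\frac{C_h l^2}{8M}$ in a suitable normalization — yields exactly $\frac{(B\gamma^2 + B\gamma + L\gamma)l^2}{8\pi M}$. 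The part I would be most careful about is getting the numerical constant $\frac{1}{8\pi}$ honestly: the naive left-rule constant is $\frac{1}{2}$, not $\frac{1}{8}$, so the argument must exploit either a symmetric pairing of consecutive subintervals (each pair of adjacent nodes contributing an $O(l^2/M^2)$ term that a telescoping/averaging argument improves by a factor of $4$) or a comparison against the midpoint rule; I would reconstruct that step carefully rather than quote the $\frac12$-constant version, since the stated bound genuinely claims the better constant.
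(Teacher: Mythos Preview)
Your approach is exactly the paper's: bound the Lipschitz constant of $h$ via the three-term product split, use the resolvent identity for the $B\gamma^2$ term, and feed the result into a Riemann-sum estimate. Your Lipschitz computation matches the paper line for line.

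The one point you flag as uncertain---how the constant $\tfrac{1}{8\pi}$ rather than $\tfrac{1}{4\pi}$ arises---has a clean answer you are circling but not quite stating. Because $\Gamma$ is a \emph{closed} contour, $h$ is $l$-periodic, so $h(t_M)=h(t_0)$ and the left-endpoint sum equals the trapezoidal sum:
\[
\sum_{k=0}^{M-1}\frac{l}{M}\,h(t_k)=\sum_{k=0}^{M-1}\frac{l}{M}\,\frac{h(t_k)+h(t_{k+1})}{2}.
\]
Now on each subinterval, split at the midpoint $m_k=\tfrac{t_k+t_{k+1}}{2}$ and compare to the nearer endpoint:
\[
\left\|\int_{t_k}^{t_{k+1}}h(t)\,\mathrm{d}t-\frac{l}{M}\frac{h(t_k)+h(t_{k+1})}{2}\right\|
\le C_h\int_{t_k}^{m_k}(t-t_k)\,\mathrm{d}t+C_h\int_{m_k}^{t_{k+1}}(t_{k+1}-t)\,\mathrm{d}t
=\frac{C_h}{4}\Big(\frac{l}{M}\Big)^2,
\]
which sums to $\tfrac{C_h l^2}{4M}$ rather than the $\tfrac{C_h l^2}{2M}$ of the naive left rule. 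With $C_h=\tfrac{B\gamma^2+B\gamma+L\gamma}{2\pi}$ this gives the stated bound. So the ``telescoping/averaging'' you mention is precisely periodicity plus the trapezoidal midpoint trick, and the improvement is a factor of $2$, not $4$.
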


\begin{proof}
For any $t_1, t_2 \in [0, l]$, 
\begin{equation}
\left\|(z(t_1) I_N -A)^{-1} - (z(t_2) I_N -A)^{-1}\right\| = \left\|\frac{(z(t_1)-z(t_2))I}{(z(t_1) I_N -A)(z(t_2) I_N -A)}\right\|\leq \gamma^2|t_1-t_2|.
\end{equation}
Because $f(z)$ is $L$-Lipschitz continuous on $\Gamma$, we have
\begin{equation}
\begin{aligned}
    \left\|h(t_1)-h(t_2)\right\| \leq &\left\|\frac{1}{2 \pi \mathrm{i}} f(z(t_1))\mathrm{e}^{\mathrm{i} \theta(t_1)}\right\| \times \left\|(z(t_1) I_N-A)^{-1}- (z(t_2) I_N-A)^{-1}\right\|+ \\ 
    &\left\|\frac{1}{2 \pi \mathrm{i}} f(z(t_1))-\frac{1}{2 \pi \mathrm{i}} f(z(t_2))\right\| \times \left\|\mathrm{e}^{\mathrm{i} \theta(t_1)} (z(t_2) I_N-A)^{-1}\right\|+ \\ 
    &\left\|\mathrm{e}^{\mathrm{i} \theta(t_1)}-\mathrm{e}^{\mathrm{i} \theta(t_2)}\right\| \times \left\|\frac{1}{2 \pi \mathrm{i}} f(z(t_2)) (z(t_2) I_N-A)^{-1}\right\| \\ 
    \leq & \frac{B\gamma^2+B\gamma+L\gamma}{2\pi}|t_1-t_2|.
\end{aligned}
\end{equation}
Then we split the original integral into integrals over several small segments.
Note that $$\sum_{k=0}^{M-1}\left(t_{k+1}-t_k\right)h(t_k)=\sum_{k=0}^{M-1}\left(t_{k+1}-t_k\right)\frac{h\left(t_k\right)+h\left(t_{k+1}\right)}{2}.$$
For every segment, we have
\begin{equation}
\begin{aligned}
    &\left\|\int_{t_k}^{t_{k+1}} h(t) \mathrm{d} t 
- \left(t_{k+1}-t_k\right) \frac{h\left(t_k\right)+h\left(t_{k+1}\right)}{2}\right\|\\
\leq &\left\|\int_{t_k}^{\frac{t_k+t_{k+1}}{2}} \left(h(t)-h(t_k)\right) \mathrm{d} t+\int^{t_k+1}_{\frac{t_k+t_{k+1}}{2}} (h(t)-h(t_{k+1})) \mathrm{d} t\right\| \\
\leq & \int_{t_k}^{\frac{t_k+t_{k+1}}{2}} \frac{(t-t_k)(B\gamma^2+B\gamma+L\gamma)}{2\pi} \mathrm{d} t+\int^{t_k+1}_{\frac{t_k+t_{k+1}}{2}} \frac{(t_{k+1}-t)(B\gamma^2+B\gamma+L\gamma)}{2\pi} \mathrm{d} t \\
= & \frac{B\gamma^2+B\gamma+L\gamma}{2\pi} \cdot \frac{(t_{k+1}-t_k)^2}{4}= \frac{(B\gamma^2+B\gamma+L\gamma)l^2}{8\pi M^2}.
\end{aligned}
\end{equation}
Then summarizing all of these integrals, we can get the result
\begin{equation}
    \left\|f(A)-f_M(A)\right\|\leq\sum_{k=0}^{M-1}\left\|\int_{t_k}^{t_{k+1}} h(t) \mathrm{d} t 
- \left(t_{k+1}-t_k\right) \frac{h\left(t_k\right)+h\left(t_{k+1}\right)}{2}\right\|   \leq \frac{(B\gamma^2+B\gamma+L\gamma)l^2}{8\pi M}.
\end{equation}

\end{proof}

\subsection{LCU lemma}
The LCU lemma gives a way to construct a linear combination of unitaries $\sum_{i=0}^{K-1}c_i U_i$ \cite[7.3]{Lin_qasc}. In the usual setting, for any $i$, $c_i$ is a non-negative real number. But in our analysis, we need to consider the case where $c_i$ is a complex number. Assume that  $c_i=r_i e^{\mathrm{i} \theta_i}$ with $r_i>0, \theta_i \in[0,2 \pi)$, define $\sqrt{c_i}=\sqrt{r_i} e^{\mathrm{i} \theta_i / 2}$ and the prepare oracles $V$ and $\widetilde{V}$:
\begin{equation}
    V=\frac{1}{\sqrt{\|c\|_1}}\left(\begin{array}{cccc}
\sqrt{c_0} & * & \cdots & * \\
\vdots & \vdots & \ddots & \vdots \\
\sqrt{c_{K-1}} & * & \cdots & *
\end{array}\right),
\end{equation}
\begin{equation}\label{matrix:tildeV}
    \widetilde{V}=\frac{1}{\sqrt{\|c\|_1}}\left(\begin{array}{ccc}
\sqrt{c_0} & \cdots & \sqrt{c_{K-1}} \\
* & \cdots & * \\
\vdots & \ddots & \vdots \\
* & \cdots & *
\end{array}\right).
\end{equation}
Together with the select oracle
\begin{equation}
    U:=\sum_{i \in[K]}\ket{i}\bra{i}\otimes U_i, 
\end{equation}
we can construct $\sum_{i=0}^{K-1}c_i U_i$ as follows. 
\begin{lemma}[LCU lemma]\label{lemma:lcu}
Let $c=(c_0,c_1,\cdots,c_{K-1})$ be a $K$-dimensional complex vector, and define $W=\left(\widetilde{V} \otimes I_n\right) U\left(V \otimes I_n\right)$. Then for any input state $|\psi\rangle$,
$$
W\left|0^{\mathrm{log}K}\right\rangle|\psi\rangle=\frac{1}{\|c\|_1}\left|0^{\mathrm{log}K}\right\rangle \sum_{i=0}^{K-1}c_i U_i|\psi\rangle+|\tilde{\perp}\rangle,
$$
where $|\widetilde{\perp}\rangle$ is an unnormalized state satisfying
$$
\left(\left|0^{\mathrm{log}K}\right\rangle\left\langle 0^{\mathrm{log}K}\right| \otimes I_n\right)|\widetilde{\perp}\rangle=0.
$$
\end{lemma}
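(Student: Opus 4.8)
The plan is to verify the identity by direct computation, tracking the ancilla register through the three operators $V\otimes I_n$, $U$, and $\widetilde V\otimes I_n$ in turn. First I would apply $V\otimes I_n$ to the input $|0^{\log K}\rangle|\psi\rangle$: by the definition of $V$, its first column is $\frac{1}{\sqrt{\|c\|_1}}(\sqrt{c_0},\dots,\sqrt{c_{K-1}})^\top$, so $(V\otimes I_n)|0^{\log K}\rangle|\psi\rangle = \frac{1}{\sqrt{\|c\|_1}}\sum_{i=0}^{K-1}\sqrt{c_i}\,|i\rangle|\psi\rangle$. The asterisk entries of $V$ never enter because the input ancilla state is $|0^{\log K}\rangle$; this is the only place the structure of $V$ is used.

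Next I would apply the select oracle $U=\sum_{i\in[K]}|i\rangle\langle i|\otimes U_i$, which acts diagonally on the ancilla and sends the state to $\frac{1}{\sqrt{\|c\|_1}}\sum_{i=0}^{K-1}\sqrt{c_i}\,|i\rangle\otimes U_i|\psi\rangle$. Then I would apply $\widetilde V\otimes I_n$. Here I only need the first row of $\widetilde V$, which by \eqref{matrix:tildeV} is $\frac{1}{\sqrt{\|c\|_1}}(\sqrt{c_0},\dots,\sqrt{c_{K-1}})$; hence the amplitude of $|0^{\log K}\rangle$ in $(\widetilde V\otimes I_n)|i\rangle$ is $\frac{\sqrt{c_i}}{\sqrt{\|c\|_1}}$. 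Projecting onto the $|0^{\log K}\rangle$ component of the ancilla therefore yields $\frac{1}{\|c\|_1}\sum_{i=0}^{K-1}\sqrt{c_i}\cdot\sqrt{c_i}\,|0^{\log K}\rangle\otimes U_i|\psi\rangle = \frac{1}{\|c\|_1}|0^{\log K}\rangle\sum_{i=0}^{K-1}c_i U_i|\psi\rangle$, where I use the convention $\sqrt{c_i}\cdot\sqrt{c_i}=\sqrt{r_i}e^{\mathrm{i}\theta_i/2}\cdot\sqrt{r_i}e^{\mathrm{i}\theta_i/2}=r_i e^{\mathrm{i}\theta_i}=c_i$. Defining $|\widetilde\perp\rangle := W|0^{\log K}\rangle|\psi\rangle - \frac{1}{\|c\|_1}|0^{\log K}\rangle\sum_i c_i U_i|\psi\rangle$ as the remainder, the orthogonality claim $(|0^{\log K}\rangle\langle 0^{\log K}|\otimes I_n)|\widetilde\perp\rangle = 0$ is immediate, since by construction the full state $W|0^{\log K}\rangle|\psi\rangle$ minus its projection onto $|0^{\log K}\rangle$ has no $|0^{\log K}\rangle$ component in the ancilla register.

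The only subtlety — and the step deserving the most care — is the complex square root convention: with $c_i$ not a nonnegative real, $\sqrt{c_i}$ is defined as $\sqrt{r_i}e^{\mathrm{i}\theta_i/2}$ rather than by the principal branch, so that $(\sqrt{c_i})^2=c_i$ holds exactly, and one must confirm that both $V$ and $\widetilde V$ use the \emph{same} value $\sqrt{c_i}$ (not a conjugate) in the relevant column and row respectively. With that convention fixed, the product of the amplitude picked up from $V$ and the amplitude picked up from $\widetilde V$ is $\sqrt{c_i}\cdot\sqrt{c_i}=c_i$, exactly reproducing the coefficients of the target LCU. Everything else is bookkeeping. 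I would also note in passing that $V$ and $\widetilde V$ can be completed to unitary matrices precisely because their distinguished column/row has unit $2$-norm, namely $\frac{1}{\|c\|_1}\sum_i|\sqrt{c_i}|^2 = \frac{1}{\|c\|_1}\sum_i r_i = 1$; this justifies that $W$ is a legitimate unitary circuit but is not otherwise needed for the identity.
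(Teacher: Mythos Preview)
Your proof is correct: the direct computation through the three layers $V\otimes I_n$, $U$, and $\widetilde V\otimes I_n$, together with the observation that $(\sqrt{c_i})^2=c_i$ under the paper's convention $\sqrt{c_i}=\sqrt{r_i}e^{\mathrm{i}\theta_i/2}$, yields exactly the claimed $|0^{\log K}\rangle$-component, and the remainder is orthogonal by construction. Note, however, that the paper does not actually supply a proof of this lemma---it is stated as a standard result with a citation to \cite[7.3]{Lin_qasc}---so there is no in-paper argument to compare against; your write-up is precisely the routine verification one would give.
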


In other words, $W$ is a $ (\|c\|_1, \mathrm{log}K, 0)$-block-encoding of $\sum_{i=0}^{K-1}c_i U_i$.

\subsection{Single-ancilla LCU}
\label{Single-Ancilla LCU}
Traditional LCU requires many ancilla qubits and complex multi-qubit controlled unitary operations, which are a challenge for early- and intermediate-term quantum computers. Recent work \cite{Chakraborty2024} significantly reduces the number of ancilla qubits in the LCU to one and simplifies the complex multi-qubit control gates, but requires the circuit to be repeated more times. Let $S=\sum_{i=0}^{K-1}c_i U_i$ and $\ket{\psi}$ be the initial state. Unlike the traditional LCU where we can get the quantum state $S\ket{\psi} / \| S\ket{\psi} \|$, the single-ancilla LCU aims to estimate the expectation value of an observable $O$. 

In single-ancilla LCU, we need $c_i$ to be positive real number. However, for complex $c_i$ we can always let $c_i=r_i e^{\mathrm{i} \theta_i}$ with $r_i>0, \theta_i \in[0,2 \pi)$, then $S=\sum_{i=0}^{K-1}c_i U_i=\sum_{i=0}^{K-1}r_i(e^{\mathrm{i} \theta_i}U_i)$ and $e^{\mathrm{i} \theta_i}U_i$ can be easily implemented. Therefore, without loss of generality, we can assume that $c_i$ is positive real number for $i=0,1,\cdots,K-1$.

In order to get the expectation of the observed value, we can select the unitary operator $U_i$ by randomly sampling from the probability distribution $\frac{1}{\|c\|_1}\{c_0, c_1, \cdots, c_{K-1}\}$ and acting on them on the initial state and observe the state under $O$. 
The specific process is shown by Algorithm \ref{algorithm:random_LCU}, which gives $\mu$ to be an estimation of $\bra{\psi}S^\dagger O S\ket{\psi}$. The accuracy of this algorithm is proved in Theorem \ref{theorem:Ancilla-efficient_complexity_proof}.
\begin{algorithm}

\caption{Expectation-observable $\left(O,U_k, \ket{\psi}, T\right)$}

1. Prepare the state $\ket{\psi_1}=\ket{+}\otimes\ket{\psi}$.

2. Obtain i.i.d. samples $V_1, V_2$ from the distribution $\left\{U_k, \frac{c_k}{\|c\|_1}\right\}$.

3. For $\tilde{V}_1=|0\rangle\langle 0| \otimes I+|1\rangle\langle 1| \otimes V_1$ and $\tilde{V}_2=|0\rangle\langle 0| \otimes V_2+|1\rangle\langle 1| \otimes I$, measure $(X \otimes O)$ on the state
$$
\ket{\psi^\prime}=\tilde{V}_2 \tilde{V}_1 \ket{\psi_1}
$$

4. For the $j^{\text {th }}$ iteration, store into $\mu_j$, the outcome of the measurement in Step 3.

5. Repeat Steps 1 to 4 , a total of $T$ times.

6. Output
$$
\mu=\frac{\|c\|_1^2}{T} \sum_{j=1}^T \mu_j .
$$
\label{algorithm:random_LCU}
\end{algorithm}

\subsection{Quantum singular value transformation}

Quantum Singular Value Transformation (QSVT, \cite{GilyenSuLowEtAl2019}) is an effective method for calculating matrix singular value transformation. It has different forms for different parity conditions. We use it here to solve the matrix inversion, so only its odd function version is introduced. For a square matrix $A \in C^{N \times N}$, where for simplicity we assume $N = 2^n$ for some positive integer $n$, the singular value decomposition (SVD) of the normalized matrix $A$ can be written as
\begin{equation}
    A = W \Sigma V^{\dagger}.
\end{equation}
\begin{definition}[Generalized matrix function] \label{definition:SVT_function}
Let $f: \mathbb{R} \rightarrow \mathbb{R}$ be a scalar function such that $f\left(\sigma_i\right)$ is defined for all $i=1,2, \ldots, N$. The generalized matrix function is defined as
$$
f^{\diamond}(A):=W f(\Sigma) V^{\dagger},
$$
where
$$
f(\Sigma)=\operatorname{diag}\left(f\left(\sigma_1\right), f\left(\sigma_2\right), \ldots, f\left(\sigma_N\right)\right).
$$
\end{definition}
We assume there is an $(\alpha, m; 0)$-block-encoding of $A$ denoted by $U_A$, so that the singular values of $A/\alpha$ are in $[0, 1]$. For an $n$ degree real coefficient odd polynomial $p$ satisfying $p(x)\in[-1,1]$ for any $x\in[-1,1]$, QSVT provides an efficient way to implement $p^\diamond(A/\alpha)$ on a quantum computer using a simple circuit, with one ancilla qubit and $n$ queries to $U_A$.

\section{Quantum algorithms}\label{sec:algorithm}

In this section, we introduce the detailed steps of the quantum algorithm for approximating $f_M(A)$. We first introduce the traditional version with the lower query complexity, then an ancilla-efficient version that simplifies the quantum circuit and uses fewer ancilla qubits. 

\subsection{Complexity-minimum algorithm}\label{section:Complexity-minimum Algorithm}

Our task is to approximate $f_M(A)$ in Eq. \eqref{function:f_M(A)_general} given $\Gamma$, $M$ and $A$. We start with the initial state $|\psi\rangle$. 
Note that we have block-encoding \( U_A \) of the matrix \( A \). 
Then
\begin{equation}
    U_A|0^a\rangle|\psi\rangle = \frac{1}{\alpha}|0^a\rangle A|\psi\rangle +\left|\perp\right\rangle.
\end{equation}

Let $z_k=z(t_k)$ be the $k$-th discrete point selected on $\Gamma$, $U_s$ be a select oracle, $V$ be a controlled rotation satisfying
\begin{equation}
    V|k\rangle|0\rangle_V = \frac{1}{\sqrt{\alpha + |z_k|}}|k\rangle\left(\sqrt{z_k}|0\rangle+\mathrm{i}\sqrt{\alpha}|1\rangle\right),
    \label{operator:control_rotation}
\end{equation}
\begin{equation}
    U_s = |0\rangle \langle0|\otimes I_N + |1\rangle \langle1|\otimes U_A.
\end{equation}
Thus, applying Lemma \ref{lemma:lcu}, we have that
\begin{equation}
\label{select_oracle}
    \left(\widetilde{V} \otimes I_a\otimes I_N\right) \left(I_k\otimes U_s\right)\left(V \otimes I_a\otimes I_N\right)|k\rangle |0\rangle_V|0^a\rangle|\psi\rangle = \frac{|k\rangle|0\rangle_V|0^a\rangle(z_k I_N-A)|\psi\rangle}{\alpha + |z_k|}+\ket{\widetilde{\perp}}, 
\end{equation}
where $\widetilde{V}$ is a control rotation, and its action on the last qubit is constructed from Eq. \eqref{matrix:tildeV}.
Among them, $\ket{\widetilde{\perp}}$ refers to both the vertical component in block-encoding and LCU. Thus we have that Eq. \eqref{select_oracle} is an $(\alpha + |z_k|, a+1, 0)$-block-encoding of $z_k I_N-A$ controlled by $\ket{k}$. 

As $x^{-1}$ is an odd function on $[-1,1]$ and singular at $x = 0$, for a small positive number $\delta$, we can use an odd polynomial $p(x)$ to approximate $\frac{3\delta}{4x}$ on $[-1,-\delta] \cup[\delta, 1]$, and when the precision is sufficient, $|p(x)|$ is less than $1$ on $[-1,1]$. Then $p^\diamond(\frac{z_k I_N-A}{\alpha + |z_k|})^\dagger$ can be implemented by QSVT, where $p^\diamond(U)$ is defined by Definition \ref{definition:SVT_function}. The control circuit of Eq. \eqref{select_oracle} can be kept on the QSVT circuit like Fig.\ref{fig:circuit_Control_QSVT}.
\begin{figure}
    \centerline{
    \Qcircuit @R=1em @C=1em {
    \ket{k} \quad\quad\quad & 
    \qw & \qw & \qw & \qw & \qw & 
    \ctrl{2} & \qw & \qw & \qw & \ctrl{2} &
    \qw & \cdots & & \qw & \qw & \qw & \qw & \qw & \qw &\\
    \ket{0} \quad\quad\quad &
    \qw & \gate{H} &\targ & \gate{e^{-i \phi_0 Z}} & \targ &
    \qw & \targ & \gate{e^{-i \phi_{1} Z}} & \targ & \qw & 
    \qw & \cdots & & \qw  &\targ & \gate{e^{-i \phi_d Z}} & \targ & \gate{H} & \qw & \\ 
    \ket{0}_V\ket{0^a} \quad\quad\quad\quad & 
    \qw & \qw & \ctrlo{-1} & \qw & \ctrlo{-1} & 
    \multigate{1}{U_k} & \ctrlo{-1} & \qw & \ctrlo{-1} & \multigate{1}{U_k^\dagger} & 
    \qw & \cdots & & \qw & \ctrlo{-1} & \qw & \ctrlo{-1} & \qw & \qw &  \\
    \ket{\psi} \quad\quad\quad & 
    \qw & \qw & \qw & \qw & \qw & 
    \ghost{U_k} & \qw & \qw & \qw & \ghost{U_k^\dagger} & 
    \qw & \cdots & & \qw & \qw & \qw & \qw & \qw & \qw &  \\
    }
    }
    \caption{QSVT with control circuit. Due to the control operator, it can also act as a select oracle $\mathrm{SEL}$ in outer layer LCU.}
    \label{fig:circuit_Control_QSVT}
\end{figure} 

Next, let
\begin{equation}
    c = \left(c_0, \cdots, c_{M-1}\right) :=\left(\frac{2l}{3\pi \mathrm{i}\delta M(\alpha+|z_0|)} f\left(z_0\right)\mathrm{e}^{\mathbf{i} \theta_0}, \cdots, \frac{2l}{3\pi \mathrm{i}\delta M(\alpha+|z_0|)} f\left(z_{M-1}\right) \mathrm{e}^{\mathbf{i} \theta_{M-1}}\right).
\end{equation} 
Then
\begin{equation}
\begin{aligned}
    \sum_{k=0}^{M-1} c_kp^\diamond\left(\frac{z_k I_N-A}{\alpha + |z_k|}\right)^\dagger &\approx \sum_{k=0}^{M-1}\frac{2l}{3\pi \mathrm{i}\delta M(\alpha+|z_k|)} f\left(z_k\right)\mathrm{e}^{\mathrm{i} \theta_k} \frac{3\delta}{4}\left(\frac{z_k I_N-A}{\alpha + |z_k|}\right)^{-1}\\
    &=\sum_{k=0}^{M-1}\frac{l}{2\pi \mathrm{i}M}  f\left(z_k\right)\mathrm{e}^{\mathrm{i} \theta_k}\left(z_k I_N-A\right)^{-1}=f_M(A). 
\end{aligned}
\end{equation}
We can generate the unitary $C$ such that
\begin{equation}
\label{prepare_oracle}
    C|0\rangle=\frac{1}{\sqrt{\|c\|_1}} \sum_{j=0}^{M-1} \sqrt{c_j}|j\rangle.
\end{equation}
Then, we can define the select oracle
\begin{equation}
    \mathrm{SEL} = \sum_{k=0}^{M-1} |k\rangle \langle k|\otimes W_k,
\end{equation}
where $W_k$ is the block-encoding of $p^\diamond\left(\frac{z_k I_N-A}{\alpha + |z_k|}\right)^\dagger$ controlled by $\ket{k}$. This oracle is exactly the oracle defined in Eq. \eqref{select_oracle}.
Then by Lemma \ref{lemma:lcu}, the operator 
\begin{equation}\label{operator:total_circuit}
     (\widetilde{C} \otimes I )\mathrm{SEL}\left(C\otimes I\right)
\end{equation}
gives a block-encoding of $\sum_{k=0}^{M-1} c_kp^\diamond\left(\frac{z_k I_N-A}{\alpha + |z_k|}\right)^\dagger $ , which is an approximation of $f_M(A)$. Since the polynomial $p$ computed is the same for all $k$, the QSVT circuit is also the same for all $k$. Therefore, $\mathrm{SEL}$ is given by the circuit shown in Fig.\ref{fig:circuit_Control_QSVT}.

\subsection{Ancilla-efficient algorithm}\label{section:Ancilla-efficient Algorithm}

Section \ref{Single-Ancilla LCU} gives us a method to simplify the circuit and reduce the number of ancilla qubits. This method no longer gets an approximation of the target state $f(A)\ket{\psi} / \| f(A)\ket{\psi} \|$, but the observation value of the target state under the observable $O$. The question is defined by Definition \ref{definition:estimation_problem}.

In this method, we no longer use complex control gate $V$, but for every $k$, we implement
\begin{equation}
    V_k|0\rangle_V = \frac{1}{\sqrt{\alpha + |z_k|}}(\sqrt{z_k}|0\rangle+\mathrm{i}\sqrt{\alpha}|1\rangle),
\end{equation}
then follow the steps above. We can get $U_k$ to be an $(1,a+2)$-block-encoding of $p^\diamond\left(\frac{z_k I-A}{\alpha+|z_k|}\right)^\dagger$ for each $k$ respectively.

Let $\ket{\widetilde{\psi}}=\ket{0^{a+2}}\otimes\ket{\psi}$, then $U_k\ket{\widetilde{\psi}}=\ket{0^{a+2}}\otimes p^\diamond(\frac{z_k I-A}{\alpha+|z_k|})^\dagger\ket{\psi}+\ket{\perp}$. Let $\widetilde{O}=\ket{0^{a+2}}\bra{0^{a+2}}\otimes O$, then when $O$ is an Hermitian matrix, $\widetilde{O}$ is still Hermitian, and we have
\begin{equation}
   \begin{aligned}
\widetilde{O}U_k\ket{\widetilde{\psi}}&=\left(\ket{0^{a+2}}\bra{0^{a+2}}\otimes O\right)\left(\ket{0^{a+2}}\otimes p^\diamond\left(\frac{z_k I-A}{\alpha+|z_k|}\right)^\dagger\ket{\psi}+\ket{\perp}\right)\\
   &= \left(\ket{0^{a+2}}\otimes O p^\diamond\left(\frac{z_k I-A}{\alpha+|z_k|}\right)^\dagger\ket{\psi}\right),
\end{aligned}
\end{equation}

\begin{equation}
\begin{aligned}
\bra{\widetilde{\psi}}U_k^\dagger\widetilde{O}U_k\ket{\widetilde{\psi}}=&\left(\bra{0^{a+2}}\otimes\bra{\psi} p^\diamond\left(\frac{z_k I-A}{\alpha+|z_k|}\right)+\bra{\perp}\right)\left(\ket{0^{a+2}}\otimes O p^\diamond\left(\frac{z_k I-A}{\alpha+|z_k|}\right)^\dagger\ket{\psi}\right)\\
=&\bra{\psi} p^\diamond\left(\frac{z_k I-A}{\alpha+|z_k|}\right)O p^\diamond\left(\frac{z_k I-A}{\alpha+|z_k|}\right)^\dagger\ket{\psi}.
\end{aligned}
\end{equation}

Therefore, we can use the algorithm in Section \ref{Single-Ancilla LCU} with $\ket{\widetilde{\psi}}$ be the initial state and $\widetilde{O}$ be the observable. 
In the second step of Algorithm \ref{algorithm:random_LCU}, we can sample $V_1, V_2$ from distribution $\left\{U_k, \frac{c_k}{\|c\|_1}\right\}$. After the above steps in Algorithm \ref{algorithm:random_LCU}, we would get $\mu$ with
\begin{equation}
\begin{aligned}
    \mathbb{E}(\mu)=&\|c\|_1^2\left(\frac{1}{\|c\|_1}\right)^2\bra{\widetilde{\psi}}\left(\sum_k c_k U_k\right)^\dagger\widetilde{O}\left(\sum_kc_kU_k\right)\ket{\widetilde{\psi}}\\
    =&\bra{\psi}\left(\sum_k c_k p^\diamond\left(\frac{z_k I-A}{\alpha+|z_k|}\right)^\dagger\right)^\dagger O\left(\sum_kc_kp^\diamond\left(\frac{z_k I-A}{\alpha+|z_k|}\right)^\dagger\right)\ket{\psi}\\
    \approx&\bra{\psi}f^\dagger(A) Of(A)\ket{\psi}.
\end{aligned}
\end{equation}

\section{Complexity analysis}\label{sec:complexity}

In this section, we analyze the query complexity of the algorithm to $U_A$ and $U_\psi$, as well as the number of ancilla qubits used, under different conditions mentioned previously.

\subsection{Complexity-minimum algorithm}

In this part, we consider the algorithm given in Section \ref{section:Complexity-minimum Algorithm}.
Due to the relaxation of the holomorphic region, integral curve, and the norm condition about $A$ to the spectrum, we do not have exponential convergence of $\epsilon$ with respect to $M$. Fortunately, the size of $M$ does not affect the number of calls to  $U_A$ and $U_{\psi}$.

In the previous discussion, we mentioned that we can use $p(x)$ to fit $\frac{3\delta}{4x}$ on $[-1,-\delta] \cup[\delta, 1]$. Suppose that
\begin{equation}
    \left|p(x)-\frac{3 \delta}{4 x}\right| \leq \epsilon^{\prime}, \quad \forall x \in[-1,-\delta] \cup[\delta, 1]
\end{equation} 
and $|p(x)| \leq 1$ for all $x \in [-1,1]$. The existence of such an odd polynomial of degree $\mathcal{O}(\frac{1}{\delta}\mathrm{log}(\frac{1}{\epsilon^\prime}))$ is guaranteed by \cite[Corollary 69]{GilyenSuLowEtAl2019}. 

Assume that there is singular value decomposition $$\frac{z_k I_N-A}{\alpha + |z_k|}=V_k \frac{\Sigma_k}{\alpha+|z_k|} W_k^{\dagger},$$ then \cite[Theorem 2]{GilyenSuLowEtAl2019} enables us to implement $$\left(p^{\diamond}\left(\frac{z_k I_N-A}{\alpha + |z_k|}\right)\right)^{\dagger}=W_k p\left(\frac{\Sigma_k}{\alpha+|z_k|}\right) V_k^{\dagger}.$$ If all diagonal elements of $\frac{\Sigma_k}{\alpha+|z_k|}$, i.e., the singular values of $\frac{z_k I_N-A}{\alpha + |z_k|}$, are in the interval $[\delta , 1]$ for $k = 0,1,\cdots,M-1$, then we have 
\begin{equation}\label{QSVT_error}
    \left\|\left(p^{\diamond}\left(\frac{z_k I_N-A}{\alpha + |z_k|}\right)\right)^{\dagger}-(3 \delta / 4)\left(\frac{z_k I_N-A}{\alpha + |z_k|}\right)^{-1}\right\|=\left\|p\left(\frac{\Sigma_k}{\alpha+|z_k|}\right)-(3 \delta / 4)\left(\frac{\Sigma_k}{\alpha+|z_k|}\right)^{-1}\right\| \leq \epsilon^{\prime}.
\end{equation}

Assume $\Gamma$ is inside disk $|z| \leq R$, then $0\leq\alpha + |z_k| \leq \alpha+R$. Under the assumption that for any $t \in [0, l]$, $\left\|(z(t) I_N -A)^{-1}\right\| \leq \gamma$,  the singular values of $\frac{z_k I_N-A}{\alpha + |z_k|}$ are all larger than $\delta = \frac{1}{\gamma(\alpha+R)}$. Therefore, 
\begin{equation}
\|c\|_1 \leq \mathcal{O}\left(\frac{Bl}{\alpha\delta}\right)= \mathcal{O}(Bl\gamma(1+R/\alpha) ).
\end{equation}
Since we use $\sum_{k=0}^{M-1} c_kp^\diamond\left(\frac{z_k I_N-A}{\alpha + |z_k|}\right)^\dagger $ to approximate $f_M(A)$, we have
\begin{equation}
\label{equation:QSVT_error}
    \left\|\sum_{k=0}^{M-1} c_kp^\diamond\left(\frac{z_k I_N-A}{\alpha + |z_k|}\right)^\dagger -f_M(A)\right\| \leq\sum_{k=0}^{M-1}\|c_k\|\left\| p^\diamond\left(\frac{z_k I_N-A}{\alpha + |z_k|}\right)^\dagger -(3 \delta / 4)\left(\frac{z_k I_N-A}{\alpha + |z_k|}\right)^{-1}\right\|\leq \|c\|_1\epsilon^{\prime}.
\end{equation}
When the algorithm succeeds, we get the state $$\sum_{k=0}^{M-1} c_kp^\diamond\left(\frac{z_k I_N-A}{\alpha + |z_k|}\right)^\dagger\ket{\psi}/\left\|\sum_{k=0}^{M-1} c_kp^\diamond\left(\frac{z_k I_N-A}{\alpha + |z_k|}\right)^\dagger\ket{\psi}\right\|.$$

Also, choose a proper $M$ so that  
\begin{equation}\label{condition:M}
    \left\|f(A)\ket{\psi}-f_M(A)\ket{\psi}\right\| \leq \|c\|_1\epsilon^\prime,
\end{equation}
then combine these two equations together, we have
\begin{equation}
    \left\|f(A)\ket{\psi}-\sum_{k=0}^{M-1} c_kp^\diamond\left(\frac{z_k I_N-A}{\alpha + |z_k|}\right)\ket{\psi}\right\| \leq 2\|c\|_1\epsilon^\prime,
\end{equation}
and due to that $\|\frac{u}{\|u\|}-\frac{v}{\|v\|}\|\leq2\frac{\|u-v\|}{\|u\|}$, 
\begin{equation}
    \left\|\frac{f(A)\ket{\psi}}{\|f(A)\ket{\psi}\|}-\frac{\sum_{k=0}^{M-1} c_kp^\diamond\left(\frac{z_k I_N-A}{\alpha + |z_k|}\right)^\dagger\ket{\psi}}{\|\sum_{k=0}^{M-1} c_kp^\diamond\left(\frac{z_k I_N-A}{\alpha + |z_k|}\right)^\dagger\ket{\psi}\|}\right\| \leq \frac{4\|c\|_1\epsilon^\prime}{\|f(A)\ket{\psi}\|}.
\end{equation}
Therefore, we can let 
$$
\epsilon^\prime = \frac{\|f(A)\ket{\psi}\|\epsilon}{4\|c\|_1}.
$$
As $\|c\|_1 = \mathcal{O}(Bl\gamma(1+R/\alpha) )$, we have $$\frac{1}{\epsilon^\prime}= \mathcal{O}\left(\frac{Bl\gamma (1+R/\alpha) }{\|f(A)\ket{\psi}\|\epsilon}\right),$$ 
and the total number of calls to $U_A$ by $\mathrm{SEL}$ is 
\begin{equation}
    \mathcal{O} \left(\frac{1}{\delta}\mathrm{log}\frac{1}{\epsilon^\prime}\right) = \mathcal{O}\left(\gamma(\alpha+R)\mathrm{log}\left(\frac{Bl\gamma (1+R/\alpha) }{\|f(A)\ket{\psi}\|\epsilon}\right)\right).
\end{equation}

As the whole circuit is given by Eq. \eqref{operator:total_circuit}, we should consider $C,\widetilde{C}$ and $\mathrm{SEL}$ respectively when calculating complexity. For any given unit vector $\ket{\psi}$, the success probability of Eq. \eqref{operator:total_circuit} is 
$$
\left\|\sum_{k=0}^{M-1} c_kp^\diamond\left(\frac{z_k I_N-A}{\alpha + |z_k|}\right)^\dagger\ket{\psi}\right\|^2/\|c\|_1^2 .  
$$
Therefore, with amplitude amplification, 
$$
\mathcal{O}\left(\|c\|_1/\left\|\sum_{k=0}^{M-1} c_kp^\diamond\left(\frac{z_k I_N-A}{\alpha + |z_k|}\right)^\dagger\ket{\psi}\right\|\right)
$$
calls to Eq. \eqref{operator:total_circuit} can make the algorithm succeed with probability at least $1/2$. 

According to Proposition \ref{proposition:general_discrete_error}, in order to suit Eq. \eqref{condition:M} we need to choose $M$ so that 
\begin{equation}
    \frac{(B\gamma^2+B\gamma+L\gamma)l^2}{8\pi M}\leq\epsilon^\prime = \frac{\|f(A)\ket{\psi}\|\epsilon}{4\|c_1\|}, 
\end{equation}
which means
\begin{equation}
    \log M = \mathcal{O}\left(\log\left(\frac{(B\gamma^2+B\gamma+L)Bl\gamma^3}{\|f(A)\ket{\psi}\|\epsilon}\right)\right)=\mathcal{O}\left(\log\left(\frac{(B\gamma+L)l}{\|f(A)\ket{\psi}\|\epsilon}\right)\right).
\end{equation}
Since Eq. \eqref{operator:total_circuit} needs $\log M+a+2$ ancilla qubits, it's a $\left(\|c\|_1, \mathcal{O}\left(\log\frac{(B\gamma+L)l}{\|f(A)\ket{\psi}\|\epsilon}\right)+a+2, 2\|c\|_1\epsilon^\prime\right)$-block-encoding of $f(A)$.

Suppose that $\epsilon$ is small enough, then $\sum_{k=0}^{M-1} c_kp^\diamond\left(\frac{z_k I_N-A}{\alpha + |z_k|}\right)^\dagger\ket{\psi}$ and $f(A)\ket{\psi}$ have the same order. In other words, $$\|c\|_1/\left\|\sum_{k=0}^{M-1} c_kp^\diamond\left(\frac{z_k I_N-A}{\alpha + |z_k|}\right)^\dagger\ket{\psi}\right\|= \mathcal{O}\left(\frac{Bl\gamma (1+R/\alpha) }{\left\|f(A)\ket{\psi}\right\|}\right).$$ 
Summarizing these calculations, we can get the following result. 

\begin{theorem}
Suppose that $\Gamma$ is a continuous simple closed curve covering
all eigenvalues of $A$ on the disk $|z| \leq R$, and $f(z)$ is an holomorphic function inside $\Gamma$ and $L$-Lipschitz continuous on $\Gamma$. Let $B$ be the maximum value of $|f(z)|$ on $\Gamma$. $U_A$ is an $(\alpha,a,0)$-block-encoding of A. Under the assumption that for any $z \in \Gamma$, $\left\|(z I_N -A)^{-1}\right\| \leq \gamma$, we can solve the problem defined by Definition \ref{definition:main_problem} using a quantum circuit with
$$\mathcal{O}\left(\frac{Bl\gamma^2(\alpha+R)}{\left\|f(A)\ket{\psi}\right\|}\mathrm{log}\left(\frac{Bl\gamma (1+R/\alpha) }{\|f(A)\ket{\psi}\|\epsilon}\right) \right)$$ queries to $U_A$, $$\mathcal{O}\left(\frac{Bl\gamma (1+R/\alpha) }{\left\|f(A) \ket{\psi}\right\|}\right) $$ queries to $U_{\psi}$ and $$\mathcal{O}\left(\mathrm{log}\left(\frac{(B\gamma+L)l}{\|f(A)\ket{\psi}\|\epsilon}\right)\right)+a+2$$ ancilla qubits.

\label{theorem:general_complexity}
\end{theorem}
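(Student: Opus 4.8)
The plan is to assemble the theorem from the ingredients already established in this section, tracking three quantities separately: the query complexity to $U_A$, the query complexity to $U_\psi$, and the ancilla count. The overall architecture of the algorithm is the block-encoding of $f(A)$ given by Eq.~\eqref{operator:total_circuit}, namely $(\widetilde{C}\otimes I)\,\mathrm{SEL}\,(C\otimes I)$, wrapped in amplitude amplification to boost the success probability above $1/2$. So the first step is to recall that $\mathrm{SEL}$ is realized by the controlled-QSVT circuit of Fig.~\ref{fig:circuit_Control_QSVT}, which uses an odd polynomial $p$ of degree $\mathcal{O}(\tfrac1\delta\log\tfrac1{\epsilon'})$ with $\delta = \tfrac1{\gamma(\alpha+R)}$, and that each invocation therefore makes $\mathcal{O}(\gamma(\alpha+R)\log\tfrac1{\epsilon'})$ queries to $U_A$ and $\mathcal{O}(1)$ queries to $U_\psi$ (the latter only through the single preparation of $\ket\psi$ inside the amplified circuit).

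Second, I would pin down the target precision $\epsilon'$. There are two sources of error separating the true state $|f\rangle$ from the state actually produced: the QSVT approximation error, bounded by $\|c\|_1\epsilon'$ via Eq.~\eqref{equation:QSVT_error}, and the Riemann-sum discretization error $\|f(A)\ket\psi - f_M(A)\ket\psi\|$, controlled by Proposition~\ref{proposition:general_discrete_error} once $M$ is chosen large enough to satisfy Eq.~\eqref{condition:M}. Adding these and using the elementary bound $\|\tfrac{u}{\|u\|}-\tfrac{v}{\|v\|}\|\le 2\tfrac{\|u-v\|}{\|u\|}$ gives normalized-state error at most $4\|c\|_1\epsilon'/\|f(A)\ket\psi\|$, so setting $\epsilon' = \|f(A)\ket\psi\|\epsilon/(4\|c\|_1)$ achieves the target $\epsilon$. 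Substituting the bound $\|c\|_1 = \mathcal{O}(Bl\gamma(1+R/\alpha))$ from this section turns $1/\epsilon'$ into $\mathcal{O}\!\big(\tfrac{Bl\gamma(1+R/\alpha)}{\|f(A)\ket\psi\|\epsilon}\big)$, which is exactly the argument appearing inside the logarithms in the statement.

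Third, I would count the amplification overhead and the ancillas. The success probability of Eq.~\eqref{operator:total_circuit} on input $\ket\psi$ is $\|\sum_k c_k p^\diamond(\cdot)^\dagger\ket\psi\|^2/\|c\|_1^2$, so amplitude amplification costs $\mathcal{O}(\|c\|_1/\|\sum_k c_k p^\diamond(\cdot)^\dagger\ket\psi\|)$ repetitions; for $\epsilon$ small this denominator is of the same order as $\|f(A)\ket\psi\|$, giving the $U_\psi$ bound $\mathcal{O}(Bl\gamma(1+R/\alpha)/\|f(A)\ket\psi\|)$ directly, and multiplying by the per-round $U_A$ count yields the stated $U_A$ complexity $\mathcal{O}\!\big(\tfrac{Bl\gamma^2(\alpha+R)}{\|f(A)\ket\psi\|}\log(\cdots)\big)$. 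For the ancillas, $\mathrm{SEL}$ needs $a$ qubits for the block-encoding $U_A$, one for the LCU over the two-term rotation $V$ in Eq.~\eqref{operator:control_rotation}, one for the QSVT signal-processing qubit, and $\log M$ for the outer LCU index register; choosing $M$ from Eq.~\eqref{condition:M} via Proposition~\ref{proposition:general_discrete_error} makes $\log M = \mathcal{O}(\log\tfrac{(B\gamma+L)l}{\|f(A)\ket\psi\|\epsilon})$ after absorbing polynomially-related factors into the logarithm, giving the claimed ancilla count $\mathcal{O}(\log\tfrac{(B\gamma+L)l}{\|f(A)\ket\psi\|\epsilon})+a+2$.

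I expect the main obstacle to be bookkeeping rather than any deep step: carefully justifying that all the polynomially-related factors inside the various logarithms can be collapsed into the single clean expression stated (so that, e.g., $\log\tfrac{(B\gamma^2+B\gamma+L)Bl\gamma^3}{\|f(A)\ket\psi\|\epsilon}$ simplifies to $\log\tfrac{(B\gamma+L)l}{\|f(A)\ket\psi\|\epsilon}$ up to constants), and making sure the ``$\epsilon$ small enough'' hypothesis used to equate $\|\sum_k c_k p^\diamond(\cdot)^\dagger\ket\psi\|$ with $\|f(A)\ket\psi\|$ in order is consistent with the precision regime $0\le\epsilon\le 1/2$ of Definition~\ref{definition:main_problem}. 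One should also confirm that the assumption $\|(zI_N-A)^{-1}\|\le\gamma$ on $\Gamma$ indeed guarantees the singular values of $(z_kI_N-A)/(\alpha+|z_k|)$ lie in $[\delta,1]$ so that the QSVT error bound Eq.~\eqref{QSVT_error} applies uniformly in $k$ — this is where the choice $\delta=1/(\gamma(\alpha+R))$ comes from and it must be threaded through consistently.
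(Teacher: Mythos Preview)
Your proposal is correct and follows essentially the same approach as the paper's own argument: the paper likewise assembles the theorem by (i) fixing $\delta=1/(\gamma(\alpha+R))$ and bounding $\|c\|_1=\mathcal{O}(Bl\gamma(1+R/\alpha))$, (ii) combining the QSVT error $\|c\|_1\epsilon'$ with the Riemann-sum error from Proposition~\ref{proposition:general_discrete_error} and setting $\epsilon'=\|f(A)\ket\psi\|\epsilon/(4\|c\|_1)$ via the same normalization inequality, and (iii) multiplying the per-round $U_A$ cost $\mathcal{O}(\tfrac1\delta\log\tfrac1{\epsilon'})$ by the amplitude-amplification overhead $\mathcal{O}(\|c\|_1/\|f(A)\ket\psi\|)$, with the ancilla count $\log M+a+2$ obtained exactly as you describe. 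The bookkeeping concerns you flag (collapsing the logarithms, the ``$\epsilon$ small enough'' assumption) are precisely the points the paper handles in the same informal way.
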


In addition to the select oracle, another part of the computation is to construct prepare oracles $C$ and $\widetilde{C}$. Notice that the prepare oracles $C$ and $\widetilde{C}$ prepare superpositions of $M$ basis states. In general, we can prepare an $M$-dimensional quantum state with cost $\mathcal{O}(M)$ \cite{SBM06}, but this might incur a gate complexity that is polynomial in the inverse error $\frac{1}{\|f(A)\ket{\psi}\|\epsilon}$, as $\widehat{M}$ scales polynomially in $\frac{1}{\epsilon}$. However, since the amplitudes of the states are known integrable functions evaluated at discrete points, the state preparation circuits can be constructed more efficiently, in time only $\mathcal{O}(\mathrm{poly log} M)$ \cite{GroverRudolph2002,McardleGilyenBerta2022}. 

\subsection{Ancilla-efficient algorithm}

In this part, we discuss the complexity of the ancilla-efficient algorithm given in Section \ref{section:Ancilla-efficient Algorithm}. Because  $\widetilde{O}=\ket{0^{a+2}}\bra{0^{a+2}}\otimes O$,
$\ket{\widetilde{\psi}}=\ket{0^{a+2}}\otimes\ket{\psi}$, we have $\|\widetilde{O}\|=\|O\|$ and $U_k\ket{\widetilde{\psi}}=\ket{0^{a+2}}\otimes p^\diamond\left(\frac{z_k I-A}{\alpha+|z_k|}\right)^\dagger\ket{\psi}+\ket{\perp}$.
Following the idea of \cite[Theorem 8]{Chakraborty2024}, we have the following result, whose proof is in Appendix \ref{section:Ancilla-efficient complexity}. 

\begin{theorem}
\label{theorem:Ancilla-efficient_convergence}
Let $\epsilon, \xi_1, \xi_2 \in(0,1)$ be some parameters. $U_k$ is an $(1,a+2,0)$-block-encoding of $p^\diamond\left(\frac{z_k I-A}{\alpha+|z_k|}\right)^\dagger$. Let $\ket{\psi}$ be an initial state and $\ket{\widetilde{\psi}} =\ket{0^{a+2}}\otimes\ket{\psi}$, $O$ be an observable and $\widetilde{O}=\ket{0^{a+2}}\bra{0^{a+2}}\otimes O$.
Suppose that $\left\|f(A)-\sum_j c_j p^\diamond\left(\frac{z_k I-A}{\alpha+|z_k|}\right)^\dagger\right\| \leq \xi_1$, where 
$$
\xi_1 \leq \frac{\epsilon}{6\|O\|\|f(A)\|}.
$$
Furthermore, let
$$
T \geq \frac{\|O\|^2 \ln ( 2/ \xi_2)\|c\|_1^4}{\epsilon^2}.
$$
Then, Algorithm \ref{algorithm:random_LCU} estimates $\mu$ such that
$$
\left|\mu-\bra{\psi}f(A)^\dagger O f(A)\ket{\psi}\right| \leq \epsilon,
$$
with probability at least $(1-\xi_2)^2$, using of one ancilla qubit and $T$ repetitions of the quantum circuit in Algorithm \ref{algorithm:random_LCU}, Step 3.
\end{theorem}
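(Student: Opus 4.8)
The plan is to adapt the single-ancilla LCU analysis of \cite[Theorem 8]{Chakraborty2024} to our block-encoded generalized matrix functions. Abbreviate $S=\sum_k c_k\, p^\diamond\!\left(\frac{z_k I-A}{\alpha+|z_k|}\right)^{\dagger}$, the operator that Algorithm~\ref{algorithm:random_LCU} effectively applies to $\ket{\psi}$. The first step is to pin down $\mathbb{E}(\mu)$: using the block-encoding identity $U_k\ket{\widetilde\psi}=\ket{0^{a+2}}\otimes p^\diamond\!\left(\frac{z_k I-A}{\alpha+|z_k|}\right)^{\dagger}\ket{\psi}+\ket{\perp}$ together with $\widetilde O=\ket{0^{a+2}}\bra{0^{a+2}}\otimes O$, exactly as in Section~\ref{section:Ancilla-efficient Algorithm}, and averaging over the i.i.d.\ draws of $V_1,V_2$ from $\{U_k,\,c_k/\|c\|_1\}$, one finds that the $X\otimes\widetilde O$ Hadamard-test outcome has mean $\tfrac{1}{\|c\|_1^2}\,\mathrm{Re}\,\bra{\psi}S^\dagger O S\ket{\psi}$, so that $\mathbb{E}(\mu)=\bra{\psi}S^\dagger O S\ket{\psi}$ (the real part equals the whole quantity because $S^\dagger O S$ is Hermitian).

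The second step is the bias estimate $\bigl|\mathbb{E}(\mu)-\bra{\psi}f(A)^\dagger Of(A)\ket{\psi}\bigr|\le\epsilon/2$. Writing $S=f(A)+E$ with $\|E\|\le\xi_1$ and expanding the quadratic form, the difference is at most $\|O\|\bigl(\|f(A)\|+\|S\|\bigr)\|E\|\le 2\xi_1\|O\|\|f(A)\|+\xi_1^2\|O\|$, and the hypothesis $\xi_1\le \epsilon/(6\|O\|\|f(A)\|)$ forces this below $\epsilon/2$.

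The third step is concentration. Each recorded $\mu_j$ is an eigenvalue of $X\otimes\widetilde O$, hence $|\mu_j|\le\|\widetilde O\|=\|O\|$; the $T$ terms $\|c\|_1^2\mu_j$ are therefore i.i.d.\ and confined to $[-\|c\|_1^2\|O\|,\ \|c\|_1^2\|O\|]$, and Hoeffding's inequality applied to $\mu=\tfrac{\|c\|_1^2}{T}\sum_j\mu_j$ gives $\Pr\!\bigl(|\mu-\mathbb{E}(\mu)|>\epsilon/2\bigr)\le 2\exp\!\bigl(-T\epsilon^2/(8\|c\|_1^4\|O\|^2)\bigr)$, which drops below $\xi_2$ once $T=\mathcal{O}\bigl(\|O\|^2\|c\|_1^4\ln(1/\xi_2)/\epsilon^2\bigr)$, matching the stated bound up to an absolute constant. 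Following \cite{Chakraborty2024}, the tail estimate is invoked on two parts of the estimator, each holding with probability $\ge 1-\xi_2$, which accounts for the $(1-\xi_2)^2$ in the conclusion; a triangle inequality combining this deviation bound with the $\epsilon/2$ bias bound yields $|\mu-\bra{\psi}f(A)^\dagger Of(A)\ket{\psi}|\le\epsilon$.

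The main obstacle is not any individual estimate but the bookkeeping: propagating the QSVT/discretization error $\xi_1$ through the \emph{bilinear} map $S\mapsto\bra{\psi}S^\dagger OS\ket{\psi}$ with all constants explicit, and verifying that the single-ancilla estimator of \cite{Chakraborty2024}—set up for unitaries combined with positive real coefficients—still applies verbatim after we absorb the phases of the complex $c_k$ into the $U_k$ and replace the unitaries by the non-Hermitian, sub-normalized block-encoded blocks $p^\diamond\!\left(\frac{z_k I-A}{\alpha+|z_k|}\right)^{\dagger}$, which is exactly where $\|\widetilde O\|=\|O\|$ and the exactness of the $(1,a+2,0)$-block-encoding $U_k$ enter.
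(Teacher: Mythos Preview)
Your proposal is correct and follows essentially the same route as the paper: compute $\mathbb{E}(\mu)=\bra{\psi}S^\dagger O S\ket{\psi}$ via the block-encoding identity, bound the bias $|\bra{\psi}S^\dagger OS\ket{\psi}-\bra{\psi}f(A)^\dagger Of(A)\ket{\psi}|$ by expanding $S=f(A)+E$, and then apply Hoeffding with range $[-\|O\|\|c\|_1^2,\|O\|\|c\|_1^2]$. The only cosmetic difference is that the paper packages your second step as a standalone perturbation lemma (giving the clean bound $3\xi_1\|O\|\|f(A)\|$ under the harmless side assumption $\|f(A)\|\ge 1$), and its proof in fact obtains probability $1-\xi_2$ from a single Hoeffding application rather than the $(1-\xi_2)^2$ stated in the theorem, so your ``two parts'' justification is not actually needed.
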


We choose $M$ and $\epsilon^\prime$ such that 
$$
\frac{(B\gamma^2+B\gamma+L\gamma)l^2}{8\pi M}\leq \frac{\xi_1}{2}
$$ 
and 
$$
\left\|\sum_{k=0}^{M-1} c_kp^\diamond\left(\frac{z_k I_N-A}{\alpha + |z_k|}\right)^\dagger -f_M(A)\right\| \leq \frac{\xi_1}{2}.
$$
That means the degree of polynomial $p$ is 
$$
\mathcal{O} \left(\frac{1}{\delta}\mathrm{log\left(\frac{\|c\|_1}{\xi_1}\right)}\right)=\mathcal{O} \left(\gamma(\alpha+R)\mathrm{log}\left(\frac{Bl\gamma(1+R/\alpha)\|O\|\|f(A)\|}{\epsilon}\right) \right).
$$ 
The degree of polynomial $p$ is the same as the number of times QSVT visits $U_A$.
For each time we need $a$ ancilla qubits for $U_A$, $1$ for inner layer LCU, $1$ for QSVT and $1$ for outer layer LCU. Therefore, we have the following theorem:

\begin{theorem}
Suppose that $\Gamma$ is a continuous simple closed curve covering
all eigenvalues of $A$ on the disk $|z| \leq R$, and $f(z)$ is an holomorphic function inside $\Gamma$ and $L$-Lipschitz continuous on $\Gamma$. Let $B$ be the maximum value of $|f(z)|$ on $\Gamma$. $U_A$ is an $(\alpha,a,0)$-block-encoding of A. Under the assumption that for any $z \in \Gamma$, $\left\|(z I_N -A)^{-1}\right\| \leq \gamma$, let
$$
T =\mathcal{O} \left(\frac{\|O\|^2 \ln (1 / \xi_2)(Bl\gamma)^4(1+R/\alpha)^4}{\epsilon^2 }\right),
$$ we can solve the problem defined by Definition \ref{definition:estimation_problem} with probability at least $(1-\xi_2)^2$ using a quantum circuit with
$$
\mathcal{O} \left(\gamma(\alpha+R)\mathrm{log}\left(\frac{Bl\gamma(1+R/\alpha)\|O\|\|f(A)\|}{\epsilon}\right)\right)
$$ queries to $U_A$, $1$ query to $U_{\psi}$ and $a+3$ ancilla qubits, and repeat this circuit $T$ times.

\label{theorem:Ancilla-efficient_complexity}
\end{theorem}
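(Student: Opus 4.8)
The plan is to assemble the bound from three ingredients already in hand: the Riemann-sum discretization estimate of Proposition~\ref{proposition:general_discrete_error}, the QSVT matrix-inverse error bound \eqref{equation:QSVT_error}, and the sampling-based LCU convergence guarantee of Theorem~\ref{theorem:Ancilla-efficient_convergence}; the argument is error budgeting plus resource counting, with no new analytic content. \emph{First}, I would fix the operator-norm error of the approximant $S := \sum_{k=0}^{M-1} c_k\, p^{\diamond}\!\left(\tfrac{z_kI_N-A}{\alpha+|z_k|}\right)^{\dagger}$ by splitting $\|f(A)-S\| \le \|f(A)-f_M(A)\| + \|f_M(A)-S\|$, where the second term is at most $\|c\|_1\epsilon'$ by \eqref{equation:QSVT_error}. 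Requiring each term to be at most $\xi_1/2$ forces, via Proposition~\ref{proposition:general_discrete_error}, $M = \Theta\!\left((B\gamma^2+B\gamma+L\gamma)l^2/\xi_1\right)$ and $\epsilon' = \Theta(\xi_1/\|c\|_1)$. Since $\Gamma$ lies in $\{|z|\le R\}$ we have $\alpha+|z_k|\le \alpha+R$, so the bound $\|(z_kI_N-A)^{-1}\|\le\gamma$ makes every singular value of $(z_kI_N-A)/(\alpha+|z_k|)$ at least $\delta = 1/(\gamma(\alpha+R))$, and $\|c\|_1 = \mathcal{O}(Bl\gamma(1+R/\alpha))$ exactly as in the complexity-minimum analysis of Section~\ref{sec:complexity}.

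\emph{Next}, I would count queries to $U_A$. By \cite[Corollary 69]{GilyenSuLowEtAl2019}, an odd polynomial $p$ approximating $\tfrac{3\delta}{4x}$ to accuracy $\epsilon'$ on $[-1,-\delta]\cup[\delta,1]$ with uniform norm at most $1$ on $[-1,1]$ has degree $\mathcal{O}\!\left(\tfrac1\delta\log\tfrac1{\epsilon'}\right) = \mathcal{O}\!\left(\gamma(\alpha+R)\log\tfrac{\|c\|_1}{\xi_1}\right)$, and this degree equals the number of calls each $U_k$ makes to $U_A$. Substituting the value $\xi_1 = \Theta\!\left(\epsilon/(\|O\|\,\|f(A)\|)\right)$ dictated by the hypothesis of Theorem~\ref{theorem:Ancilla-efficient_convergence}, together with the expression for $\|c\|_1$, yields the claimed $U_A$-query count.

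\emph{Then}, I would apply Theorem~\ref{theorem:Ancilla-efficient_convergence} to this $S$, taking $\widetilde O = \ket{0^{a+2}}\bra{0^{a+2}}\otimes O$ (Hermitian, with $\|\widetilde O\|=\|O\|$) and $\ket{\widetilde\psi} = \ket{0^{a+2}}\otimes\ket{\psi}$, exactly as set up in Section~\ref{section:Ancilla-efficient Algorithm}. Its hypothesis on $\xi_1$ is the choice just made, and it produces $\mu$ with $\left|\mu - \bra{\psi} f(A)^\dagger O f(A)\ket{\psi}\right|\le\epsilon$ with probability at least $(1-\xi_2)^2$ using $T = \mathcal{O}\!\left(\|O\|^2\ln(1/\xi_2)\,\|c\|_1^4/\epsilon^2\right)$ repetitions of the Step-3 circuit; plugging in $\|c\|_1 = \mathcal{O}(Bl\gamma(1+R/\alpha))$ gives the stated $T$. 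Finally, a single run of that circuit (Algorithm~\ref{algorithm:random_LCU}) prepares $\ket{\psi}$ once — hence one $U_\psi$ query, with no amplitude amplification — and uses $a$ ancillas inside $U_A$, one for the inner LCU that implements $V_k$, one QSVT signal qubit, and one for the single-ancilla outer LCU, so $a+3$ in total.

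\emph{The main obstacle} — indeed the only point needing care — is keeping the error chain into Theorem~\ref{theorem:Ancilla-efficient_convergence} consistent: one must verify that $\|f(A)-S\|\le\xi_1$, together with $\|S\|\le\|f(A)\|+\xi_1$, implies $\left|\bra{\psi} S^\dagger O S\ket{\psi} - \bra{\psi} f(A)^\dagger O f(A)\ket{\psi}\right| = \mathcal{O}(\xi_1\|O\|\,\|f(A)\|)$, which is precisely where the constant $6$ in the hypothesis $\xi_1 \le \epsilon/(6\|O\|\,\|f(A)\|)$ comes from. This is a two-term triangle-inequality estimate — routine, but it is what certifies that the chosen $\xi_1$, and therefore the resulting $M$, $\epsilon'$, and polynomial degree, genuinely deliver additive error $\epsilon$ rather than a constant multiple of it.
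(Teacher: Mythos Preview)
Your proposal is correct and follows essentially the same approach as the paper: split $\|f(A)-S\|$ into the discretization piece (Proposition~\ref{proposition:general_discrete_error}) and the QSVT piece \eqref{equation:QSVT_error}, choose $M$ and $\epsilon'$ so each is at most $\xi_1/2$, read off the polynomial degree $\mathcal{O}\bigl(\delta^{-1}\log(\|c\|_1/\xi_1)\bigr)$ with $\delta=1/(\gamma(\alpha+R))$ and $\|c\|_1=\mathcal{O}(Bl\gamma(1+R/\alpha))$, and then invoke Theorem~\ref{theorem:Ancilla-efficient_convergence} with $\xi_1=\Theta(\epsilon/(\|O\|\,\|f(A)\|))$ to obtain $T$ and the success probability; the ancilla count $a+3$ and the single $U_\psi$ query per run match the paper's accounting exactly. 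The ``main obstacle'' you flag is already absorbed into the hypothesis of Theorem~\ref{theorem:Ancilla-efficient_convergence}, so there is nothing further to verify beyond quoting that theorem.
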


\section{Applications}\label{sec:applications}

In this section, we will introduce applications of the contour-integral-based algorithm and show their complexities. 

\subsection{Hamiltonian simulation}

Consider the problem of the time-independent Hamiltonian simulation. For a Hermitian matrix $H$, we hope to solve the linear differential equation
\begin{equation}
\label{equation:Hamiltonian_simulation}
    \mathrm{i}\frac{\mathrm{d}}{\mathrm{d}t}x(t)=H x(t), \quad x(0)=\ket{\psi},
\end{equation}
and the solution can be written as
\begin{equation}
    x(T) = \mathrm{e}^{-\mathrm{i} HT}\ket{\psi}.
\end{equation}
Therefore, after implementing function $\mathrm{e}^{-\mathrm{i} HT}$ through contour integral and applying $\mathrm{e}^{-\mathrm{i} HT}$ to the initial state, we get a method to solve time-independent Hamiltonian simulation. Let $f(x) = e^{Tx}$ with $T>0$, because of that Hermitian matrix $H$ only has real eigenvalues, the eigenvalues of $-\mathrm{i}H$ are on the imaginary axis. Since when $\|H\|>1$, we can always absorb $\|H\|$ into $T$, we assume $\|H\|\leq1$ and the eigenvalues of $-\mathrm{i}H$ are on the line between $- \mathrm{i}$ and $ \mathrm{i}$ on the complex plane, and choose $\Gamma$ to be an area that covers the eigenvalues (Fig.\ref{fig:spectrum_contour_1}, left).

\begin{figure}[ht]
    \centering
    \begin{tikzpicture}[>=Stealth, scale=1.2] 

        \draw[->, thick] (-1.5,0) -- (1.5,0) node[right] {$\mathrm{Re}$};
        \draw[->, thick] (0,-1.5) -- (0,1.5) node[above] {$\mathrm{Im}$};;

        \draw[blue, thick] (0.1,1.1) arc (85:95:1.1);
        \draw[blue, thick] (-0.1,1.1) -- (-0.1,-1.1);
        \draw[blue, thick] (-0.1,-1.1) arc (265:275:1.1);
        \draw[blue, thick] (0.1,1.1) -- (0.1,-1.1);
        \draw[red, ultra thick] (0,-1) -- (0,1);

        \node[left] at (-0.02, 1)  {$\mathrm{i}$};
        \node[left] at (-0.02, -1) {$-\mathrm{i}$};
        \node[right] at (0.3, 0.8) {$f(z)=\mathrm{e}^{Tz}$};

    \end{tikzpicture}
    \begin{tikzpicture}[>=Stealth, scale=1.2]
    
        \draw[->, thick] (-1.5,0) -- (1.5,0) node[right] {$\mathrm{Re}$};
        \draw[->, thick] (0,-1.5) -- (0,1.5) node[above] {$\mathrm{Im}$};
        
        \draw[blue, thick] (0,0) circle [radius=1.1];
        \filldraw[fill=red!20, draw=red, thick] 
        (0,0) circle [radius=1];
        \node[right] at (0.02, 0.88)  {$R_2 =\rho(A)$};
        \node[right] at (0.02, 1.22)  {$R_1 =\rho(A)+a$};
        \node[right] at (0.3, -1) {$f(z)=p(z)$};
    \end{tikzpicture}
    \caption{Left: In time-independent Hamiltonian simulation, the spectrum of $-\mathrm{i} H$ and the covering contour. Right: When calculating matrix polynomial, the spectrum of $A$ and the covering contour.}
    \label{fig:spectrum_contour_1}
\end{figure}

We choose $\Gamma$ as the boundary of $\{z:|z|\leq 1+a, -a \leq \mathrm{Re}(z)\leq a\}$, for $a$ is a parameter to be determined. The maximum value of $|f(z)|$ on $\Gamma$ is $B = e^{aT}$. Since the Hermitian matrix can be unitarily diagonalized, let $-\mathrm{i}H = UDU^\dagger$, then
\begin{equation}
    \left\|(z I +\mathrm{i}H)^{-1}\right\| = \left\|U(z I -D)^{-1}U^\dagger\right\| = \left\|(z I -D)^{-1}\right\| \leq \frac{1}{a}.
\end{equation}
Therefore, we can let $\gamma=\frac{1}{a}$, and $\Gamma$ is on the disk $\{|z| \leq R = \pi + a\}$. According to Theorem \ref{theorem:general_complexity}, we need $$\mathcal{O}\left(\frac{\mathrm{e}^{aT}(1 + a)(\alpha+1 + a)}{a^2\left\|f(-\mathrm{i}H)\ket{\psi}\right\|}\mathrm{log}\left(\frac{\mathrm{e}^{aT}(1 + a)}{\|f(-\mathbf{i}H)\ket{\psi}\|a\epsilon} \right) \right)$$ queries to $U_H$, $$\mathcal{O}\left(\frac{\mathrm{e}^{aT}(1 + a)^2}{a\left\|f(-\mathrm{i}H) \ket{\psi}\right\|}\right) $$ queries to $U_{\psi}$ and $$\mathcal{O}\left(\mathrm{log}\left(\frac{1}{a\|f(-\mathrm{i}H) \ket{\psi}\|\epsilon}\right) \right)+\tilde{a}+2$$ ancilla qubits, while $\tilde{a}$ ancilla qubits are used to construct $U_A$. Note that $f(-\mathrm{i}H) \ket{\psi}$ is unitary and $\left\|f(-\mathrm{i}H) \ket{\psi}\right\|=1$. Due to that $\|H\|\leq1$, we assume $\alpha=\mathcal{O}(1)$. We would choose $a = \mathrm{min}\left\{\frac{1}{T},1\right\}$, then $\alpha+1 + a=\mathcal{O}(1)$, and we need $\mathcal{O}\left(T^2\mathrm{log}\left(\frac{T}{\epsilon}\right)  \right)$  queries to $U_H$, $\mathcal{O}\left(T\right)$  queries to $U_{\psi}$ and $\mathcal{O}\left(\mathrm{log}\left(\frac{T}{\epsilon}\right)\right)+\tilde{a}+2$ ancilla qubits.

\begin{table}[ht]
\centering
\setlength{\tabcolsep}{15pt}
\renewcommand{\arraystretch}{1.8}
\begin{tabular}{c|c}
\hline\hline
\textbf{Methods} & \textbf{Query complexities}  \\
\hline
QSVT\cite{GilyenSuLowEtAl2019} & $\mathcal{O}\left(T +\frac{\mathrm{log}(1/\epsilon)}{\mathrm{log}(\mathrm{e}+\mathrm{log}(1/\epsilon)/T)}\right) $ \\
\hline
This work & $\mathcal{O}\left(T^2\mathrm{log}\left(\frac{T}{\epsilon}\right)  \right)$  \\
\hline\hline
\end{tabular}
\caption{Comparison between contour integral and the previous approach for solving Eq. \eqref{equation:Hamiltonian_simulation}. Our task given on Definition \ref{definition:main_problem}, and both methods use amplitude amplification to amplify the probability. $T$ is the evolution time, and $\epsilon$ is the tolerated error in the output state.}
\label{compare:Hamiltonian_simulation}
\end{table}

A comparison is given in Table \ref{compare:Hamiltonian_simulation}. If limited to calculating Hamiltonian simulations, using QSVT\cite{GilyenSuLowEtAl2019} directly is a better approach. The advantage of the contour integral method is in calculating general matrix functions, which we will see later. 

If we use the ancilla-efficient algorithm, suppose that we would observe $\mathrm{e}^{-\mathrm{i} HT}\ket{\psi}$ under $O$, then according to Theorem \ref{theorem:Ancilla-efficient_complexity}, each time we need
$$
\mathcal{O} \left(T\mathrm{log}\left(\frac{T\|O\|}{\epsilon}\right)\right)
$$ queries to $U_A$, $1$ queries to $U_{\psi}$ and $\tilde{a}+3$ ancilla qubits, and we need to repeat the circuit $K$ times, where
$$
K =\mathcal{O} \left(\frac{\|O\|^2 \ln (1 / \xi_2)T^4}{\epsilon^2}\right).
$$

\subsection{Matrix polynomial}

Consider the problem of implementing a general matrix polynomial.  Suppose that we hope to implement $f(A)$, while $f$ is a polynomial and $A$ is a $N \times N$ matrix. Suppose the spectral radius of $A$ is $\rho(A)$, then all the eigenvalues of $A$ are covered by the circle $\{z:|z|\leq\rho(A) \}$. We choose $\Gamma$ as the boundary of $\{z:|z|\leq\rho (A)+a \}$, as shown in Fig.\ref{fig:spectrum_contour_1}, right. 

Assume that $A$ can be diagonalized as $A = SDS^{-1}$ and each element on the diagonal of $zI-D$ is not less than $a$, then
\begin{equation}
    \left\|(z I -A)^{-1}\right\| = \left\|S(z I -D)^{-1}S^{-1}\right\| \leq  \left\|S\right\|\left\|S^{-1}\right\|\left\|(z I -D)^{-1}\right\| \leq \frac{\kappa_S}{a}, 
\end{equation}
where $\kappa_S =\|S\|\|S^{-1}\|$ is the condition number of $S$. Therefore, we can let $\gamma=\frac{\kappa_S}{a}$, and $\Gamma$ is on the disk $\{z:|z| \leq R = \rho (A)+a\}$. Assume that the maximum value of $f(z)$ on $\Gamma$ is B. Then according to Theorem \ref{theorem:general_complexity}, we need $$\mathcal{O}\left(\frac{B\kappa_S^2(\rho (A)+a)(\alpha+\rho (A)+a)}{a^2\left\|f(A)\ket{\psi}\right\|}\mathrm{log}\left(\frac{B\kappa_S(\rho (A)+a)}{a\|f(A)\ket{\psi}\|\epsilon} \right) \right)$$ queries to $U_A$, $$\mathcal{O}\left(\frac{B\kappa_S(\rho (A)+a)}{a\left\|f(A) \ket{\psi}\right\|}\right) $$ queries to $U_{\psi}$ and $$\mathcal{O}\left(\mathrm{log}\left(\frac{(B\kappa_S/a+L)l}{\|f(A) \ket{\psi}\|\epsilon}\right) \right)+\tilde{a}+2$$ ancilla qubits to implement $f(A)$. Because $\rho (A)\leq\|A\|\leq\alpha$, we can choose $a \sim \alpha$ so that we need $$\mathcal{O}\left(\frac{B\kappa_S^2}{\left\|f(A)\ket{\psi}\right\|}\mathrm{log}\left(\frac{B\kappa_S}{\|f(A)\ket{\psi}\|\epsilon}\right)  \right)$$ queries to $U_A$, $$\mathcal{O}\left(\frac{B\kappa_S}{\left\|f(A) \ket{\psi}\right\|}\right) $$ queries to $U_{\psi}$ and $$\mathcal{O}\left(\mathrm{log}\left(\frac{B\kappa_S+\alpha L}{\alpha\|f(A) \ket{\psi}\|\epsilon}\right) \right)+\tilde{a}+2$$ ancilla qubits. It is only related to the maximum value of $f$ in the disk, and has nothing to do with the degree of $f$ or the coefficient of the monomial in $f$. 

If we want a naive implementation of matrix polynomials, we might think of using a block-encoding multiplication for each monomial. However, if the block-encoding of the matrix is directly multiplied, the information in state $\ket{\perp}$ will be returned to the target state, making the calculation result wrong. Therefore, we need to use separate ancilla qubits for each block-encoding. This means that even if we only calculate $f(x) = x^n$ and $U_A$ is an $(\alpha,a,0)$-block-encoding of $A$, then for $A^n$ the number of ancilla qubits required is  $an$, and the probability of success is $\alpha^n$. Even with amplitude amplification, the query complicity is still exponentially depends of $n$.

Previously the best method for computing matrix complex polynomials was QEP \cite{LowSu2024}, but its initial state preparation cost is not optimal. In \cite{LowSu2024quantumlinearalgorithmoptimal}, the initial state preparation cost was optimized. Here, we restrict both methods to real polynomials for comparison. 

In the comparison, we follow the setting of \cite{LowSu2024quantumlinearalgorithmoptimal} and assume that we need to achieve $f(A/\alpha)$. Because $\rho(A/\alpha)\leq1$, we can let $\Gamma$ be the boundary of $\{z:|z|\leq1+a, -a\leq \mathrm{Im}(z)\leq a\}$, and $a=\mathcal{O}(1)$. Then we have the comparison given in Table \ref{compare:Matrix_Polynomial}. In our work, $B$ is the maximum value of $f(z)$ on $\Gamma$, and in \cite{LowSu2024quantumlinearalgorithmoptimal}, it is assumed that $A$ has real spectra, and $\hat{B}$ is the maximum value of $f(z)$ on $[-1,1]$. Since $\Gamma$ contains $[-1,1]$, by the maximum modulus principle, $\hat{B}\leq B$. However, when the degree of the polynomial is high and the growth is not rapid (such as a high-order Taylor expansion to approximate an exponential function), the contour integral approach becomes advantageous since it does not have explicit dependence on the degree of the polynomial.

\begin{table}[ht]
\centering
\setlength{\tabcolsep}{3pt}
\renewcommand{\arraystretch}{1.8}
\scalebox{0.9}{
\begin{tabular}{c|c}
\hline\hline
\textbf{Methods} & \textbf{Query complexities}  \\
\hline
QEP
\cite{LowSu2024quantumlinearalgorithmoptimal} & $\mathcal{O} \left(\frac{\hat{B} \kappa_S^2 n \log (n)}{\left\|f(A/\alpha) \ket{\psi}\right\|} \log \left(\frac{\hat{B} \kappa_S \log (n)}{\left\|f(A/\alpha) \ket{\psi}\right\|}\right) \log \left(\frac{\log \left(\frac{\hat{B}\kappa_S \log (n)}{\left\|f(A/\alpha) \ket{\psi}\right\|}\right)}{\epsilon}\right) \right) $  \\
\hline
This work & $\mathcal{O}\left(\frac{B\kappa_S^2}{\left\|f(A/\alpha)\ket{\psi}\right\|}\mathrm{log}\left(\frac{B\kappa_S}{\|f(A/\alpha)\ket{\psi}\|\epsilon}\right)  \right)$  \\
\hline\hline
\end{tabular}
}
\caption{Comparison between contour integral and the previous approach for implementing polynomial $f(A)$. Our task is given in Definition \ref{definition:main_problem}, and all methods use amplitude amplification to amplify the probability. We assume that $A = SDS^{-1}$ is a diagonalizable matrix, $\kappa_S = \|S\|\|S^{-1}\|$ is the condition number, $\alpha$ is the block-encoding factor of A and $n$ is the degree of $f$. In our work, $B$ is the maximum value of $f(z)$ on $\Gamma$, and in \cite{LowSu2024quantumlinearalgorithmoptimal}, it is assumed that $A$ has real spectra, and $\hat{B}$ is the maximum value of $f(z)$ on $[-1,1]$.}
\label{compare:Matrix_Polynomial}
\end{table}

If we use the ancilla-efficient algorithm, suppose that we need to estimate the observable value of $O$ under the state $f(A)\ket{\psi}$, then according to Theorem \ref{theorem:Ancilla-efficient_complexity}, in each run of the quantum circuit we need
$$
\mathcal{O} \left(\kappa_S\mathrm{log}\left(\frac{B\kappa_S\|O\|\|f(A)\|}{\epsilon}\right)\right)
$$ queries to $U_A$, $1$ queries to $U_{\psi}$ and $\tilde{a}+3$ ancilla qubits, and we need to repeat the circuit $K$ times, where
$$
K =\mathcal{O} \left(\frac{\|O\|^2 \ln (1 / \xi_2)(B\kappa_S)^4}{\epsilon^2 }\right).
$$

\subsection{Linear ordinary differential equations}

\subsubsection{Generic scaling}\label{sec:app_ODE_generic}

Next we consider the initial value problem of the system of homogeneous linear ordinary differential equations (ODEs) 
\begin{equation}
    \frac{\mathrm{d}}{\mathrm{d}t}x(t)=A x(t), \quad x(0)=\ket{\psi}.
    \label{application:ODE}
\end{equation}
The solution is
\begin{equation}
    x(t)=\mathrm{e}^{At}\ket{\psi}.
\end{equation}
Let $T$ be the total evolution time, and we should implement $\mathrm{e}^{AT}$. We hope that this equation has bounded solutions in the long-term evolution process. Thus, we assume that the ODE is asymptotically stable, which means all eigenvalues of $A$ have non-negative real parts. Suppose the spectral radius of $A$ is $\rho(A)$, then all the eigenvalues of $A$ are covered by the semicircle $\{z:|z|\leq\rho(A), \mathrm{Re}(z)\leq0\}$. We choose $\Gamma$ as the boundary of $\{z:|z|\leq\rho (A)+a, \mathrm{Re}(z)\leq a\}$ for a positive real number $a$, as shown in Fig.\ref{fig:spectrum_contour_2}, left. 

\begin{figure}[ht]
    \centering
    \begin{tikzpicture}[>=Stealth, scale=1.2]
        \draw[->, thick] (-1.5,0) -- (1.5,0) node[right] {$\mathrm{Re}$};
        \draw[->, thick] (0,-1.5) -- (0,1.5) node[above] {$\mathrm{Im}$};
        
        \draw[blue, thick] (0.1,1.1) arc (85:275:1.1);
        \draw[blue, thick] (0.1,1.1) -- (0.1,-1.1);
        \filldraw[fill=red!20, draw=red, thick] 
        (0,1) arc (90:270:1) -- cycle;
        \node[right] at (0.02, 0.88)  {$R_2 =\rho(A)$};
        \node[right] at (0.02, 1.22)  {$R_1 =\rho(A)+a$};
        \node[right] at (0.3, -0.8) {$f(z)=\mathrm{e}^{Tz}$};
    \end{tikzpicture}
    \begin{tikzpicture}[>=Stealth, scale=1.2]
        \draw[->, thick] (-1.5,0) -- (1.5,0) node[right] {$\mathrm{Re}$};
        \draw[->, thick] (0,-1.5) -- (0,1.5) node[above] {$\mathrm{Im}$};
        
        \draw[blue, thick] (0,1.1) arc (90:270:1.1);
        \draw[blue, thick] (0,1.1) -- (0,-1.1);
        \filldraw[fill=red!20, draw=red, thick] 
        (-0.1,1) arc (95:265:1) -- cycle;
        \node[right] at (0.02, 0.88)  {$R_2 =\rho(A)$};
        \node[right] at (0.02, 1.22)  {$R_1 =\rho(A)+a$};
        \node[right] at (0.3, -0.8) {$f(z)=e^{Tz}$};
    \end{tikzpicture}
    \caption{The spectrum of the matrix A and the corresponding covering contour in solving differential equations. Left: the real part of the spectrum of $A$ does not exceed $0$. Right: the real part of the spectrum of $A$ has an upper bound $-a < 0$. }
    \label{fig:spectrum_contour_2}
\end{figure}

Assume that $A$ can be diagonalized, let $A = SDS^{-1}$, then $
\left\|(z I -A)^{-1}\right\| \leq \frac{\kappa_S}{a}$, where $\kappa_S$ is the condition number of $S$. Therefore, we can let $\gamma=\frac{\kappa_S}{a}$, and $\Gamma$ is on the disk $|z| \leq R = \rho (A)+a$. Then, according to Theorem \ref{theorem:general_complexity}, we need $$\mathcal{O}\left(\frac{\mathrm{e}^{aT}\kappa_S^2(\rho (A)+a)(\alpha+\rho (A)+a)}{a^2\left\|x(T)\right\|}\mathrm{log}\left(\frac{\mathrm{e}^{aT}\kappa_S(\rho (A)+a)}{\|x(T)\|a\epsilon}\right)  \right)$$ queries to $U_A$, $$\mathcal{O}\left(\frac{\mathrm{e}^{aT}\kappa_S(\rho (A)+a)}{a\left\|x(T)\right\|}\right) $$ queries to $U_{\psi}$ and $$\mathcal{O}\left(\mathrm{log}\left(\frac{(B\kappa_S/a+L)l}{\|x(T)\|\epsilon}\right)\right)+\tilde{a}+2$$ ancilla qubits to implement $\mathrm{e}^{AT}$. Suppose that $T>1$ and choose $a = \frac{1}{T}$, note that $\rho(A)\leq\|A\|$, then we need $$\mathcal{O}\left(\frac{\kappa_S^2\alpha^2T^2}{\left\|x(T)\right\|}\mathrm{log}\left(\frac{\kappa_S\alpha T}{\|x(T)\|\epsilon}\right)  \right)$$ queries to $U_A$ , $$\mathcal{O}\left(\frac{\kappa_S\alpha T}{\left\|x(T)\right\|}\right) $$ queries to $U_{\psi}$ and $$\mathcal{O}\left(\mathrm{log}\left(\frac{B\kappa_S\alpha T}{\|x(T)\|\epsilon}\right)\right)+\tilde{a}+2$$ ancilla qubits. 

A comparison is given in Table \ref{compare:ODE}. These methods assume the ODE to be semi-dissipative from different perspectives. Some methods assume $\mathrm{Re}(\lambda_A)\leq 0$, which means that all eigenvalues of $A$ have non-negative real parts. While others assume $A+A^\dagger\preceq 0$ is a negative semi-definite matrix. In fact, the condition $A+A^\dagger\preceq 0$ is stronger than $\mathrm{Re}(\lambda_A)\leq 0$, see Appendix \ref{lemma:Dissipative_conditions}. When using assumption $\mathrm{Re}(\lambda_A)\leq 0$, we also assume $A$ to be diagonalizable, but this is only for complexity analysis. Not assuming $A$ to be diagonalizable does not affect the correctness of the method, but it is difficult to give a good complexity estimate.

\begin{table}[ht]
\centering
\setlength{\tabcolsep}{3pt}
\renewcommand{\arraystretch}{1.8}
\scalebox{0.9}{
\begin{tabular}{c|c|c}
\hline\hline
\textbf{Methods} & \textbf{Condition} & \textbf{Query complexities} \\
\hline Taylor \cite{BerryChildsOstranderWang2017} & $\mathrm{Re}(\lambda_A)\leq 0$ & $\widetilde{\mathcal{O}}\left(\frac{\kappa_S \alpha T \max_t\|x(t)\|}{{\left\|x(T)\right\|}} \operatorname{poly}\left(\log \left(\frac{1}{\epsilon}\right)\right)\right)$ \\
\hline Improved Taylor \cite{Krovi2022} & $\max_t \|e^{At}\| \leq C$ & $\widetilde{\mathcal{O}}\left(\frac{ C \alpha T  \max_t\|x(t)\|}{{\left\|x(T)\right\|}}\operatorname{poly}\left(\log \left(\frac{1}{\epsilon}\right)\right)\right)$ \\
\hline Time-marching \cite{FangLinTong2023} & $A+A^\dagger\preceq0$ & $\widetilde{\mathcal{O}}\left(\frac{\alpha^2 T^2}{{\left\|x(T)\right\|}} \left(\log \left(\frac{1}{\epsilon}\right)\right)^2 \right)$  \\
\hline Time-marching \cite{FangLinTong2023} & $\mathrm{Re}(\lambda_A)\leq 0$ & $\widetilde{\mathcal{O}}\left(\frac{\alpha^2 T^2\kappa_S^{\mathcal{O}(T)}}{{\left\|x(T)\right\|}} \left(\log \left(\frac{1}{\epsilon}\right)\right)^2 \right)$ \\
\hline LCHS \cite{AnChildsLin2023}& $A+A^\dagger\preceq0$ & $\widetilde{\mathcal{O}}\left(\frac{\alpha T}{{\left\|x(T)\right\|}}\left(\log \left(\frac{1}{\epsilon}\right)\right)^{1 + o(1)}\right)$  \\
\hline Approximate LCHS \cite{LowSomma2025}& $A+A^\dagger\preceq0$ & $\widetilde{\mathcal{O}}\left(\frac{\alpha T}{{\left\|x(T)\right\|}}\log \left(\frac{1}{\epsilon}\right)\right)$  \\
\hline Block preconditioning
\cite{LowSu2024quantumlinearalgorithmoptimal} & $A+A^\dagger\preceq0$ & $\widetilde{\mathcal{O}}\left(\frac{\alpha T}{\left\|x(T)  \right\|}(\log\frac{1}{\epsilon})^2\right)$ \\
\hline Block preconditioning
\cite{LowSu2024quantumlinearalgorithmoptimal} & $\mathrm{Re}(\lambda_A)\leq 0$ & $\widetilde{\mathcal{O}}\left(\frac{\kappa_s^2\alpha T}{\left\|x(T)  \right\|}(\log\frac{1}{\epsilon})^2\right)$ \\
\hline This work (Section~\ref{sec:app_ODE_generic}) & $\mathrm{Re}(\lambda_A)\leq 0$ & $\widetilde{\mathcal{O}}\left(\frac{\kappa_S^2\alpha^2T^2}{\left\|x(T)\right\|}\mathrm{log}(\frac{1}{\epsilon})  \right)$ \\
\hline\hline 
Fast-forwarded Taylor \cite{JenningsLostaglioLowrieEtAl2024} & $\mathrm{Re}(\lambda_A) < 0$ & $\widetilde{\mathcal{O}}\left( \frac{\kappa_S^{1/2} \alpha T^{3/4} \max_t\|x(t)\|}{\|x(T)\|} \left(\log\left(\frac{1}{\epsilon}\right)\right)^2\right)$ \\
\hline Fast-forwarded Taylor \cite{AnOnwuntaYang2024} & $A + A^{\dagger} \prec 0 $  & $ \widetilde{\mathcal{O}} \left( \frac{\alpha T^{1/2} \max_t\|x(t)\|}{\|x(T)\|} \left(\log\left(\frac{1}{\epsilon}\right)\right)^2 \right)  $ \\
\hline Fast-forwarded LCHS \cite{YangOnwuntaAn2025} & $A + A^{\dagger} \prec 0 $ & $\widetilde{\mathcal{O}}\left( \frac{\alpha}{\|x(T)\|} \left(\log\left(\frac{1}{\epsilon}\right)\right)^{2+o(1)} \right)  $ \\
\hline This work (Section \ref{sec:app_ODE_FF}) & $\mathrm{Re}(\lambda_A) < 0$ & $\widetilde{\mathcal{O}}\left( \frac{\kappa_S^2 \alpha^2 }{\|x(T)\|} \log\left(\frac{1}{\epsilon}\right) \right) $ \\
\hline \hline 
\end{tabular}
}
\caption{Comparison among contour integral and previous methods for homogeneous ODEs. Here $\alpha \geq\|A\|, T$ is the evolution time, and $\epsilon$ is the error. When the condition $\mathrm{Re}(\lambda_A)\leq 0$ is used, $A$ is further assumed to be diagonalizable with matrix $S$ such that $\kappa_S \geq\|S\|\left| S^{-1}\right\|$. }
\label{compare:ODE}
\end{table}

Compared with the method based on hypothesis $A+A^\dagger\preceq0$, the contour integral method relaxes the assumption. Compared to the method based on hypothesis $\mathrm{Re}(\lambda_A)\leq 0$, although the contour integral method has a worse dependence on $T$, it has a better dependence on $\epsilon$. Therefore, the contour integral method has an advantage in short-term evolution with high precision requirements.

If we use the ancilla-efficient algorithm, suppose that we would observe $x(T)$ under $O$, then according to Theorem \ref{theorem:Ancilla-efficient_complexity}, each time we need
$$
\mathcal{O} \left(K\kappa_S\alpha T\mathrm{log}\left(\frac{\kappa_S\alpha T\|O\|\|e^{AT}\|}{\epsilon}\right)\right)
$$ queries to $U_A$, 
$\mathcal{O}(K)$ queries to $U_{\psi}$ and $\tilde{a}+3$ ancilla qubits, and we need to repeat the circuit $K$ times, where
$$
K =\mathcal{O} \left(\frac{\|O\|^2 \ln (1 / \xi_2)(\kappa_S\alpha T)^4}{\epsilon^2 }\right).
$$

\subsubsection{Fast-forwarding}\label{sec:app_ODE_FF}
Consider the problem
\begin{equation}
    \frac{\mathrm{d}}{\mathrm{d}t}x(t)=A x(t)+b, \quad x(0)=x_0.
    \label{application:ODE_inhomogeneous}
\end{equation}
The solution is
\begin{equation}
    x(t)=\mathrm{e}^{At}(x_0+A^{-1}b)-A^{-1}b.
\end{equation}
Let $\psi=x_0+A^{-1}b$, then we can first compute $\mathrm{e}^{At}\ket{\psi}$ then use LCU to sum up $-A^{-1}b$. In some cases, we have that the real part of the eigenvalues of $A$ has a lower bound $-a$ less than $0$. Then all the eigenvalues of $A$ are covered by the semicircle $\{z:|z|\leq\rho(A), \mathrm{Re}(z)\leq-a\}$. We choose $\Gamma$ as the boundary of $\{z:|z|\leq\rho (A)+a, \mathrm{Re}(z)\leq 0\}$, as shown in Fig.\ref{fig:spectrum_contour_2}, right. 

For simplicity, we only consider the complexity when $b=0$ and do not take into account the additional cost of using LCU to sum up $-A^{-1}b$. The generalization for non-homogeneous case is also straightforward, and we will not go into the calculations. Assume that $A$ can be diagonalized, let $A = SDS^{-1}$, then $
\left\|(z I -A)^{-1}\right\| \leq \frac{\kappa_S}{a}$, where $\kappa_S$ is the condition number of $S$. Therefore, we can let $\gamma=\frac{\kappa_S}{a}$, and $\Gamma$ is on the disk $|z| \leq R = \rho (A)+a$. Since the maximum value of $|f|$ on $\Gamma$ is $1$, according to Theorem \ref{theorem:general_complexity}, we need $$\mathcal{O}\left(\frac{\kappa_S^2(\rho (A)+a)(\alpha+\rho (A)+a)}{a^2\left\|x(T)\right\|} \mathrm{log}\left(\frac{\kappa_S(\rho (A)+a)}{\|x(T)\|a\epsilon}\right)\right)$$ queries to $U_A$, $$\mathcal{O}\left(\frac{\kappa_S(\rho (A)+a)}{a\left\|x(T)\right\|}\right) $$ queries to $U_{\psi}$ and $$\mathcal{O}\left(\mathrm{log}\left(\frac{(B\kappa_S/a+L)l}{\|x(T)\|\epsilon}\right)\right)+\tilde{a}+2$$ ancilla qubits to implement $x(T)$. 

Note that when using this method, the computational cost is no longer explicitly dependent on the time $T$. 
However, we remark that there is still implicit $T$ dependence in the norm $\|x(T)\|$ and, even worse, the solution norm for the homogeneous ODEs decays asymptotically exponentially in the large $T$ regime when the real parts of the eigenvalues of $A$ are negative. 
So the contour integral algorithm is more practically relevant in the moderate time regime. 
Furthermore, fortunately, for inhomogeneous ODEs, the solution norm does not decay exponentially in general, and the contour integral algorithm becomes advantageous for any time regime.

If we use the ancilla-efficient algorithm, suppose that we would observe $x(T)$ under $O$, then according to Theorem \ref{theorem:Ancilla-efficient_complexity}, each time we need
$$
\mathcal{O} \left(\frac{K\kappa_S\alpha}{a}\mathrm{log}\left(\frac{\kappa_S\alpha\|O\|\|f(A)\|}{a\epsilon}\right)\right)
$$ queries to $U_A$, 
$\mathcal{O} (K)$ queries to $U_{\psi}$ and $\tilde{a}+3$ ancilla qubits, and we need to repeat the circuit $K$ times, where
$$
K =\mathcal{O} \left(\frac{\|O\|^2 \ln (1 / \xi_2)(\kappa_S\alpha/a)^4}{\epsilon^2}\right).
$$





\section*{Acknowledgments}

DA and SJ acknowledge funding from Quantum Science and Technology - National Science and Technology Major Project via Project 2024ZD0301900, and the support by The Fundamental Research Funds for the Central Universities, Peking University.



{\small
\bibliographystyle{quantum}
\bibliography{reference}
}
\appendix

\section{Upper bound of~\texorpdfstring{$\left\|(z I_N -A)^{-1}\right\| $}{}}
The necessary and sufficient condition for the integral formula in Eq. \eqref{function:f(A)} to be valid is that $\Gamma$ covers all eigenvalues of $A$. But at this point, it is difficult to estimate the lower bound of the singular value of $z I_N -A$. We have that $\sigma_{min}(zI_N-A)=1/\|(z_k I_N -A)^{-1}\|$. In our analysis, we assume that $\|z I_N -A\|\leq \gamma$. If $A$ can be diagonalized, let $A = SDS^{-1}$ and $a$ is the lower bound of the distance from $\Gamma$ to the eigenvalues of $A$ then:

\begin{equation}
    \left\|(z I -A)^{-1}\right\| = \left\|S(z I -D)^{-1}S^{-1}\right\| \leq  \left\|S\right\|\left\|S^{-1}\right\|\left\|(z I -D)^{-1}\right\| \leq \frac{\kappa_S}{a}.
\end{equation}

However, for the general case, the singular value bound of the matrix $z_k I_N -A$ can be very bad. For example, let 
\begin{equation}
    L_n = \begin{bmatrix}
0 & 1 & 0 & \cdots & 0 \\
0 & 0 & 1 & \cdots & 0 \\
\vdots & \vdots & \ddots & \ddots & \vdots \\
0 & 0 & \cdots & 0 & 1 \\
0 & 0 & \cdots & 0 & 0
\end{bmatrix}.
\end{equation}
For $a>0$, $A=aL_N$ has spectral radius $0$, and $A^N=0$. Therefore, we have
\begin{equation}
    (zI_N-A)^{-1} = z^{-1}\left(I_N-\frac{a}{z}L_N\right)^{-1}=z^{-1}\sum_{k=0}^{N-1}\left(\frac{a}{z}L_N\right)^k= \begin{bmatrix}
z^{-1} & a z^{-2} & a^2 z^{-3} & \cdots & a^{N-1} z^{-N} \\
0 & z^{-1} & a z^{-2} & \cdots & a^{N-2} z^{-(N-1)} \\
\vdots & \vdots & \ddots & \ddots & \vdots \\
0 & 0 & \cdots & z^{-1} & a z^{-2} \\
0 & 0 & \cdots & 0 & z^{-1}
\end{bmatrix}, 
\end{equation}
because for $k=0,1,\cdots,N-1$, $\|L^k\|=1$, we have $$\left\|(z I -A)^{-1}\right\|\leq \|z\|^{-1}\sum_{k=0}^{N-1}\left(\frac{a}{\|z\|}\right)^k=\frac{1-\left(a/\|z\|\right)^{N}}{\|z\|-a},$$ which exponentially depends on $N$ when $a/\|z\|>1$.

In fact, the exponential dependence on $N$ is also the worst case. Assume $A$ has Jordan standard form $A=SJS^{-1}$, then
\begin{equation}
    \left\|(z I -A)^{-1}\right\| = \left\|S(z I -J)^{-1}S^{-1}\right\| \leq  \left\|S\right\|\left\|S^{-1}\right\|\left\|(z I -J)^{-1}\right\| \leq \left\|(z I -J)^{-1}\right\|\kappa_S.
\end{equation}
Suppose that $z I -J$ is constructed by $n$ Jordan blocks, which means
\begin{equation}
    z I -J = \begin{bmatrix}
J_1 & 0 & \cdots & 0 \\
0 & J_2 &  \cdots & 0 \\
\vdots & \vdots & \ddots  & \vdots \\
0 & 0 & \cdots &  J_n
\end{bmatrix}. 
\end{equation}
For every Jordan block $J_k$, we have
\begin{equation}
    J_k^{-1} = (\lambda_kI_{N_k}+L_{N_k})^{-1} =\lambda_k^{-1}\sum_{k=0}^{N-1}(-\lambda_k^{-1}L_{N_k})^k,
\end{equation}
and $\|J_k\| \leq \frac{1-\|\lambda\|^{-N_k}}{\|\lambda\|-1}\leq \frac{1-\|\lambda\|^{-N}}{\|\lambda\|-1}$. Therefore, we have  
\begin{equation}
    \left\|(z I -A)^{-1}\right\| = \left\|S(z I -J)^{-1}S^{-1}\right\| \leq  \left\|S\right\|\left\|S^{-1}\right\|\left\|(z I -J)^{-1}\right\| \leq \frac{1-a^{-N}}{a-1}\kappa_S.
\end{equation}
Where $a$ is the minimum distance from curve $\Gamma$ to the eigenvalues of $A$.

\section{Dissipative conditions}
In this part, we show that $A + A^\dagger \preceq 0$ is stronger than $\mathrm{Re}(\lambda_A) \leq 0$. In fact, we have the following lemma.
\label{lemma:Dissipative_conditions}
\begin{lemma}
If $A + A^\dagger$ is negative semi-definite ($A + A^\dagger \preceq 0$), then all eigenvalues of $A$ have non-positive real parts ($\operatorname{Re}(\lambda_A) \leq 0$). However, the converse is not true.
\end{lemma}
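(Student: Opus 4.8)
The plan is to handle the two assertions separately, both by elementary linear algebra.

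For the forward implication, I would start from an arbitrary eigenpair: let $\lambda$ be an eigenvalue of $A$ with a unit eigenvector $v$, so $Av = \lambda v$ and $v^\dagger v = 1$. Then $\lambda = v^\dagger A v$, and taking real parts,
\begin{equation}
  \operatorname{Re}(\lambda) = \tfrac12\big(v^\dagger A v + \overline{v^\dagger A v}\big) = \tfrac12\big(v^\dagger A v + v^\dagger A^\dagger v\big) = \tfrac12\, v^\dagger (A + A^\dagger) v.
\end{equation}
Since $A + A^\dagger \preceq 0$ by hypothesis, the right-hand side is $\leq 0$, hence $\operatorname{Re}(\lambda) \leq 0$. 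As $\lambda$ was arbitrary, every eigenvalue of $A$ has non-positive real part.

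For the converse, I would simply exhibit an explicit counterexample. Take the $2\times 2$ nilpotent matrix
\begin{equation}
  A = \begin{bmatrix} 0 & 2 \\ 0 & 0 \end{bmatrix}.
\end{equation}
Its only eigenvalue is $0$, so $\operatorname{Re}(\lambda_A) = 0 \leq 0$ holds trivially. However,
\begin{equation}
  A + A^\dagger = \begin{bmatrix} 0 & 2 \\ 2 & 0 \end{bmatrix}
\end{equation}
has eigenvalues $\pm 2$, so it is not negative semi-definite (indeed $e_1^\dagger(A+A^\dagger)e_1 = 0$ while the matrix has the positive eigenvalue $2$). This shows the implication cannot be reversed.

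There is essentially no substantive obstacle here; the only point that needs a little care is choosing a counterexample that is manifestly clean — a single Jordan block suffices, and the nilpotent case makes the eigenvalue computation immediate while keeping the symmetric part obviously indefinite.
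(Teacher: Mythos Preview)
Your proposal is correct and follows essentially the same route as the paper: the forward implication is the standard computation $\operatorname{Re}(\lambda)=\tfrac12 v^\dagger(A+A^\dagger)v\le 0$ via an eigenvector, and the counterexample is the same nilpotent $2\times 2$ Jordan block (the paper uses off-diagonal entry $1$ instead of $2$ and exhibits a vector with positive quadratic form rather than computing the eigenvalues of $A+A^\dagger$, but these are cosmetic differences).
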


\begin{proof}
First, we show that if $A + A^\dagger \preceq 0$, then $\mathrm{Re}(\lambda_A) \leq 0$. Let $\lambda$ be an eigenvalue of $A$, and $v\neq0$ is a corresponding eigenvector. Then $Av=\lambda v$. Due to that $A + A^\dagger \preceq 0$, we have
\begin{equation}
    v^\dagger(A + A^\dagger)v\leq0.
\end{equation}
Therefore,
\begin{equation}
    v^\dagger Av+v^\dagger A^\dagger v=(\lambda+\bar{\lambda})v^\dagger v=2\mathrm{Re}(\lambda)\|v\|^2\leq0.
\end{equation}
Since $v\neq0$, we have $\|v\|^2>0$ and $\mathrm{Re}(\lambda)\leq 0$. Due to the arbitrariness of $\lambda$, all eigenvalues of $A$ have non-positive real parts. 

Now we show that the converse fails. Consider the matrix $A = \begin{bmatrix} 0 & 1 \\ 0 & 0 \end{bmatrix}$.
\begin{equation}
    \det(\lambda I - A) = \det\begin{bmatrix} \lambda & -1 \\ 0 & \lambda \end{bmatrix} = \lambda^2.
\end{equation}
Then, $\lambda_A=0,0$ and all eigenvalues have real part $0 \leq 0$. However,
\begin{equation}
    A + A^\dagger = \begin{bmatrix} 0 & 1 \\ 0 & 0 \end{bmatrix} + \begin{bmatrix} 0 & 0 \\ 1 & 0 \end{bmatrix} = \begin{bmatrix} 0 & 1 \\ 1 & 0 \end{bmatrix}.
\end{equation}
Let $v=[1,1]^\dagger$, then
\begin{equation}
    v^\dagger (A + A^\dagger) v = \begin{bmatrix} 1 & 1 \end{bmatrix} \begin{bmatrix} 0 & 1 \\ 1 & 0 \end{bmatrix} \begin{bmatrix} 1 \\ 1 \end{bmatrix} =  2 > 0.
\end{equation}
Therefore, $A + A^\dagger$ is not negative semi-definite.

\end{proof}

\section{Complexity of single-ancilla LCU with block-encoding}
\label{section:Ancilla-efficient complexity}
In this section we prove the complexity of single-ancilla LCU. The proof steps are only slight modifications from the proof of \cite[Theorem 8]{Chakraborty2024} to fit the case of block-encoding. However, for completeness, we present a self-contained proof. First, we have the following theorem.

\begin{theorem}
    Suppose $P$ and $Q$ are operators such that $\|P-Q\| \leq \xi$ for some $\xi \in[0,1]$, and $U_Q$ is an $(\alpha,a,0)$-block-encoding of $Q$. Furthermore, let $\ket{\psi}$ be any state and $\ket{\widetilde{\psi}}=\ket{0^a}\otimes\ket{\psi}$, $O$ be some Hermitian operator with spectral norm $\|O\|$ and $\widetilde{O}=\ket{0^a}\bra{0^a}\otimes O$. Then, if $\|P\| \geq 1$, the following holds:
\begin{equation}
    \left\|\bra{\psi}P^\dagger O P\ket{\psi}-\alpha^2\bra{\widetilde{\psi}}U_Q^\dagger \widetilde{O} U_Q\ket{\widetilde{\psi}}\right\| \leq 3\|O\|\|P\| \xi.
\end{equation}
\label{theorem:accuracy_of_observation}
\end{theorem}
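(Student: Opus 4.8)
The plan is to first collapse the block-encoding expression into an ordinary matrix expression in $Q$, and then compare $P^\dagger O P$ with $Q^\dagger O Q$ by a short telescoping argument. Note that both sides of the claimed inequality are scalars, so $\|\cdot\|$ here is just $|\cdot|$.

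First I would unpack the block-encoding. Since $U_Q$ is an $(\alpha,a,0)$-block-encoding of $Q$, we have $(\bra{0^a}\otimes I)U_Q(\ket{0^a}\otimes I)=Q/\alpha$, so $U_Q\ket{\widetilde{\psi}}=\ket{0^a}\otimes (Q/\alpha)\ket{\psi}+\ket{\perp}$ with $(\ket{0^a}\bra{0^a}\otimes I)\ket{\perp}=0$. Applying $\widetilde{O}=\ket{0^a}\bra{0^a}\otimes O$ annihilates the $\ket{\perp}$ component, giving $\widetilde{O}U_Q\ket{\widetilde{\psi}}=\ket{0^a}\otimes O(Q/\alpha)\ket{\psi}$; pairing this with $U_Q\ket{\widetilde{\psi}}$ (the $\ket{\perp}$ part again drops out of the inner product) yields $\bra{\widetilde{\psi}}U_Q^\dagger\widetilde{O}U_Q\ket{\widetilde{\psi}}=\frac{1}{\alpha^2}\bra{\psi}Q^\dagger O Q\ket{\psi}$. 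This is exactly the computation already carried out in Section~\ref{section:Ancilla-efficient Algorithm} for the special case $Q=p^\diamond(\cdot)^\dagger$, $\alpha=1$. Hence $\alpha^2\bra{\widetilde{\psi}}U_Q^\dagger\widetilde{O}U_Q\ket{\widetilde{\psi}}=\bra{\psi}Q^\dagger O Q\ket{\psi}$, and the claim reduces to $\left|\bra{\psi}P^\dagger O P\ket{\psi}-\bra{\psi}Q^\dagger O Q\ket{\psi}\right|\le 3\|O\|\|P\|\xi$.

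Next I would bound the operator norm $\|P^\dagger O P-Q^\dagger O Q\|$ by inserting a crossed term: $P^\dagger O P-Q^\dagger O Q=P^\dagger O(P-Q)+(P-Q)^\dagger O Q$, so the norm is at most $\|P\|\,\|O\|\,\xi+\xi\,\|O\|\,\|Q\|$. To control $\|Q\|$, note $\|Q\|\le\|P\|+\|P-Q\|\le\|P\|+\xi\le 2\|P\|$, where the last step uses the hypotheses $\xi\le 1$ and $\|P\|\ge 1$. This gives $\|P^\dagger O P-Q^\dagger O Q\|\le 3\|O\|\|P\|\xi$, and since $\ket{\psi}$ is a unit vector, $\left|\bra{\psi}(P^\dagger O P-Q^\dagger O Q)\ket{\psi}\right|$ is bounded by the same quantity, finishing the proof.

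There is no substantive obstacle; the only points requiring care are the bookkeeping of the $\ket{\perp}$ terms when expanding the block-encoding on both the bra and ket sides (confirming $\widetilde{O}$ kills them), and invoking $\|P\|\ge 1$ together with $\xi\le 1$ at precisely the right moment, since converting $\|P\|+\xi$ into $2\|P\|$ is what produces the constant $3$ rather than a larger one.
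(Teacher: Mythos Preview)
Your proof is correct. The first step---collapsing $\alpha^2\bra{\widetilde{\psi}}U_Q^\dagger\widetilde{O}U_Q\ket{\widetilde{\psi}}$ to $\bra{\psi}Q^\dagger O Q\ket{\psi}$ by letting $\widetilde{O}$ kill the $\ket{\perp}$ components---is exactly what the paper does. The second step, however, is organized differently. The paper works on the \emph{state} side: it writes $P\rho P^\dagger-Q\rho Q^\dagger$ as the three-term sum $(Q-P)\rho(P^\dagger-Q^\dagger)+P\rho(P^\dagger-Q^\dagger)+(P-Q)\rho P^\dagger$, bounds its trace norm by $\xi^2+2\|P\|\xi$, and then applies H\"older's inequality $|\operatorname{Tr}[O\,\cdot\,]|\le\|O\|\,\|\cdot\|_1$. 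You instead work on the \emph{operator} side, telescoping $P^\dagger OP-Q^\dagger OQ=P^\dagger O(P-Q)+(P-Q)^\dagger OQ$ and bounding directly in spectral norm. Your route is more elementary---no trace norm, no H\"older---and yields the same constant $3$ once you use $\|Q\|\le\|P\|+\xi\le 2\|P\|$. The paper's trace-norm argument has the small advantage that it generalizes verbatim to mixed initial states $\rho$, whereas your final step $|\bra{\psi}M\ket{\psi}|\le\|M\|$ is tailored to pure states; but for the statement at hand your approach is cleaner.
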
 
\begin{proof}
Let $\rho=\ket{\psi}\bra{\psi}$, then 
\begin{equation}
    \bra{\psi}P^\dagger O P\ket{\psi}=\operatorname{Tr}[O P\ket{\psi}\bra{\psi}P^\dagger]=\operatorname{Tr}[O P\rho P^\dagger],
\end{equation}
\begin{equation}
   \begin{aligned}
\widetilde{O}U_Q\ket{\widetilde{\psi}}=\left(\ket{0^{a}}\bra{0^{a}}\otimes O\right)\left(\ket{0^{a}}\otimes \frac{Q}{\alpha}\ket{\psi}+\ket{\perp}\right)= \frac{1}{\alpha}\left(\ket{0^{a}}\otimes O Q\ket{\psi}\right),
\end{aligned}
\end{equation}

\begin{equation}
\label{equation:OQrhoQdagger}
\begin{aligned}
\bra{\widetilde{\psi}}U_Q^\dagger\widetilde{O}U_Q\ket{\widetilde{\psi}}=\frac{1}{\alpha^2}\left(\bra{0^{a}}\otimes\bra{\psi} Q^\dagger+\bra{\perp}\right)\left(\ket{0^{a}}\otimes O Q\ket{\psi}\right)=\frac{1}{\alpha^2}\bra{\psi}Q^\dagger O Q\ket{\psi}=\frac{1}{\alpha^2}\operatorname{Tr}[O Q\rho Q^\dagger].
\end{aligned}
\end{equation}
Using the identity
\begin{equation}
    P \rho P^{\dagger}-Q \rho Q^{\dagger}=(Q-P) \rho\left(P^{\dagger}-Q^{\dagger}\right)+P \rho\left(P^{\dagger}-Q^{\dagger}\right)+(P-Q) \rho P^{\dagger},
\end{equation}
we have
\begin{equation}
\begin{aligned}
\left\|P \rho P^{\dagger}-Q \rho Q^{\dagger}\right\|_1 & \leq\|Q-P\|\left\|\rho\left(P^{\dagger}-Q^{\dagger}\right)\right\|_1+\|P\|\left\|\rho\left(P^{\dagger}-Q^{\dagger}\right)\right\|_1+\|P-Q\|\left\|\rho P^{\dagger}\right\|_1 \\
& \leq\|P-Q\|^2\|\rho\|_1+\|P\|\|P-Q\|\|\rho\|_1+\|P-Q\|\|P\|\|\rho\|_1 \\
& \leq\|P-Q\|^2+2\|P-Q\|\|P\|.
\end{aligned}
\end{equation}
Then, together with the Holder's inequality with $p=\infty$ and $q=1$
\begin{equation}
\label{equation:trdistance}
\left\|\operatorname{Tr}\left[OP \rho P^{\dagger}
\right]-\operatorname{Tr}\left[O Q \rho Q^{\dagger}\right]\right\| \leq\|O\| \cdot\left\|P \rho P^{\dagger}-Q \rho Q^{\dagger}\right\|_1,
\end{equation}
we obtain
\begin{equation}
\begin{aligned}
\left\|\operatorname{Tr}\left[O P \rho P^{\dagger}\right]-\operatorname{Tr}\left[O Q \rho Q^{\dagger}\right]\right\| & \leq\|O\|\|P-Q\|^2+2\|O\|\|P\|\|P-Q\| \\
& \leq \xi^2\|O\|+2\|O\|\|P\| \xi \\
& \leq 3 \xi\|O\|\|P\|. 
\end{aligned}
\end{equation}
Together with \cref{equation:OQrhoQdagger} we get 
\begin{equation}
    \left\|\bra{\psi}P^\dagger O P\ket{\psi}-\alpha^2\bra{\widetilde{\psi}}U_Q^\dagger \widetilde{O} U_Q\ket{\widetilde{\psi}}\right\| \leq 3\|O\|\|P\| \xi.
\end{equation}
\end{proof}

With this result, we can prove the Theorem \ref{theorem:Ancilla-efficient_convergence} as below.

\begin{theorem}[Estimating expectation values of observables]
\label{theorem:Ancilla-efficient_complexity_proof}
Let $\epsilon, \xi_1, \xi_2 \in(0,1)$ be some parameters. $U_k$ is an $(1,a,0)$-block-encoding of $p^\diamond(\frac{z_k I-A}{\alpha+|z_k|})^\dagger$. Let $\ket{\psi}$ be some initial state and $\ket{\widetilde{\psi}} =\ket{0^a}\otimes\ket{\psi}$, $O$ be some observable and $\widetilde{O}=\ket{0^a}\bra{0^a}\otimes O$.
Suppose that $\left\|f(A)-\sum_j c_j p^\diamond\left(\frac{z_k I-A}{\alpha+|z_k|}\right)^\dagger\right\| \leq \xi_1$, where 
$$
\xi_1 \leq \frac{\epsilon}{6\|O\|\|f(A)\|}.
$$
Furthermore, let
$$
T \geq \frac{\|O\|^2 \ln ( 2/ \xi_2)\|c\|_1^4}{\epsilon^2}.
$$
Then, Algorithm \ref{algorithm:random_LCU} estimates $\mu$ such that
$$
\left|\mu-\bra{\psi}f(A)^\dagger O f(A)\ket{\psi}\right| \leq \epsilon,
$$
with probability at least $(1-\xi_2)^2$, using of one ancilla qubit and $T$ repetitions of the quantum circuit in Algorithm \ref{algorithm:random_LCU}, Step 3.
\end{theorem}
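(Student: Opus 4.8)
The plan is to split the target accuracy as $\epsilon=\epsilon/2+\epsilon/2$, attributing the first half to the \emph{bias} of the estimator (the gap between $\bra{\psi}f(A)^\dagger O f(A)\ket{\psi}$ and $\mathbb{E}[\mu]$, controlled by the approximation error $\xi_1$) and the second half to the \emph{statistical fluctuation} of the Monte Carlo average over the $T$ repetitions. Throughout, write $\widetilde{P}:=\sum_k c_k\,p^\diamond\!\big(\tfrac{z_k I-A}{\alpha+|z_k|}\big)^\dagger$ for the target of the LCU (phases absorbed into the sampled unitaries as in Section~\ref{section:Ancilla-efficient Algorithm}, so the $c_k$ may be taken nonnegative), and $S:=\sum_k c_k U_k$ for the corresponding combination of block-encodings.

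The first step is to identify $\mathbb{E}[\mu]$ exactly. Conditioned on the sampled pair $V_1,V_2$, Step~3 of Algorithm~\ref{algorithm:random_LCU} is a Hadamard test: starting from $\ket{+}\otimes\ket{\widetilde{\psi}}$, the circuit $\widetilde{V}_2\widetilde{V}_1$ produces $\tfrac1{\sqrt2}\big(\ket{0}\otimes V_2\ket{\widetilde{\psi}}+\ket{1}\otimes V_1\ket{\widetilde{\psi}}\big)$, and measuring $X\otimes\widetilde{O}$ has expectation $\operatorname{Re}\bra{\widetilde{\psi}}V_1^\dagger\widetilde{O}V_2\ket{\widetilde{\psi}}$ (using $X\ket{0}=\ket{1}$, $X\ket{1}=\ket{0}$, and $\widetilde{O}=\widetilde{O}^\dagger$). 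Since $V_1,V_2$ are drawn i.i.d.\ from $\{U_k,c_k/\|c\|_1\}$, independence gives $\mathbb{E}[V_1]=\mathbb{E}[V_2]=S/\|c\|_1$, so averaging over the sampling yields $\mathbb{E}[\mu_j]=\tfrac1{\|c\|_1^2}\bra{\widetilde{\psi}}S^\dagger\widetilde{O}S\ket{\widetilde{\psi}}$, which is real. Because each $U_k$ is a block-encoding and $\widetilde{O}=\ket{0^a}\bra{0^a}\otimes O$ annihilates the $\ket{\perp}$-components, $\bra{\widetilde{\psi}}S^\dagger\widetilde{O}S\ket{\widetilde{\psi}}=\bra{\psi}\widetilde{P}^\dagger O\widetilde{P}\ket{\psi}$; multiplying by the rescaling $\|c\|_1^2$ in Step~6 gives $\mathbb{E}[\mu]=\bra{\psi}\widetilde{P}^\dagger O\widetilde{P}\ket{\psi}$.

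Next, bound the bias with Theorem~\ref{theorem:accuracy_of_observation} applied to $P=f(A)$ and $Q=\widetilde{P}$ (for which $\alpha^2\bra{\widetilde{\psi}}U_Q^\dagger\widetilde{O}U_Q\ket{\widetilde{\psi}}=\bra{\psi}\widetilde{P}^\dagger O\widetilde{P}\ket{\psi}$ via the LCU block-encoding of $\widetilde{P}$): since $\|f(A)-\widetilde{P}\|\le\xi_1$, we get
\[
\big|\bra{\psi}f(A)^\dagger O f(A)\ket{\psi}-\mathbb{E}[\mu]\big|\;\le\;3\|O\|\,\|f(A)\|\,\xi_1\;\le\;\tfrac{\epsilon}{2},
\]
by the hypothesis $\xi_1\le\epsilon/(6\|O\|\|f(A)\|)$. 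For the statistical part, each $\mu_j$ is an eigenvalue of $X\otimes\widetilde{O}$, hence $|\mu_j|\le\|\widetilde{O}\|=\|O\|$, and the $\mu_j$ are i.i.d.; Hoeffding's inequality applied to $\mu=\tfrac{\|c\|_1^2}{T}\sum_j\mu_j$ shows $\Pr[\,|\mu-\mathbb{E}[\mu]|>\epsilon/2\,]\le 2\exp\!\big(-\Theta(T\epsilon^2/(\|O\|^2\|c\|_1^4))\big)$, which is at most $\xi_2$ (up to the absolute constant) once $T\ge\|O\|^2\ln(2/\xi_2)\|c\|_1^4/\epsilon^2$; tracking the failure events exactly as in \cite[Theorem~8]{Chakraborty2024} gives success probability $(1-\xi_2)^2$. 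A triangle inequality between the displayed bias bound and this statistical bound then yields $|\mu-\bra{\psi}f(A)^\dagger O f(A)\ket{\psi}|\le\epsilon$.

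I expect the main obstacle to be the exact identification of $\mathbb{E}[\mu]$ in the second step: one must check carefully that the block-encoding ancillas of the $U_k$ and the Hadamard-test ancilla together contribute no stray normalization, that the $\ket{\perp}$ garbage of each block-encoding is exactly killed by $\widetilde{O}$, and that the two \emph{independent} samples $V_1,V_2$ are precisely what turns $\tfrac1{\|c\|_1}S$ into an unbiased estimator of $\widetilde{P}$ on both sides of $O$. Once the estimator is shown to be unbiased with the claimed rescaling, the remainder (the bias bound via Theorem~\ref{theorem:accuracy_of_observation} and the Hoeffding concentration) is routine.
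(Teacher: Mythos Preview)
Your proposal is correct and follows essentially the same route as the paper: identify $\mathbb{E}[\mu]=\bra{\psi}\widetilde{P}^\dagger O\widetilde{P}\ket{\psi}$ via the Hadamard-test structure and the block-encoding property, bound the bias by Theorem~\ref{theorem:accuracy_of_observation}, control the fluctuation by Hoeffding, and combine by the triangle inequality. The only cosmetic difference is that the paper works in density-matrix notation ($\rho'=\tilde V_2\tilde V_1\rho_1\tilde V_1^\dagger\tilde V_2^\dagger$ and traces) while you phrase the same computation in bra--ket form; the content is identical.
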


\begin{proof}

First define $\rho_0=\ket{\psi}\bra{\psi}$, $\widetilde{\rho_0}=\ket{\widetilde{\psi}}\bra{\widetilde{\psi}}$, and let $$g(A)=\sum_j c_j p^\diamond\left(\frac{z_k I-A}{\alpha+|z_k|}\right)^\dagger.$$

Following Algorithm \ref{algorithm:random_LCU}, the initial state $\rho_1=\ket{+}\bra{+}\otimes\widetilde{\rho_0}$ is mapped to

\begin{equation}
\begin{aligned}
\rho^{\prime} & =\tilde{V}_2 \tilde{V}_1 \rho_1 \tilde{V}_1^{\dagger} \tilde{V}_2^{\dagger} \\
& =\frac{1}{2}\left[|0\rangle\langle 0| \otimes V_2 \widetilde{\rho_0} V_2^{\dagger}+|0\rangle\langle 1| \otimes V_2 \widetilde{\rho_0} V_1^{\dagger}+|1\rangle\langle 0| \otimes V_1 \widetilde{\rho_0} V_2^{\dagger}+|1\rangle\langle 1| \otimes V_1 \widetilde{\rho_0} V_1^{\dagger}\right].
\end{aligned}
\end{equation}
Measuring the observable $X\otimes\widetilde{O}$ yields the expectation value
\begin{equation}
\operatorname{Tr}\left[(X \otimes \widetilde{O}) \rho^{\prime}\right]=\frac{1}{2} \operatorname{Tr}\left[\widetilde{O}\left(V_1 \widetilde{\rho_0} V_2^{\dagger}+V_2 \widetilde{\rho_0} V_1^{\dagger}\right)\right]. 
\end{equation}
Taking expectation over the outcome of the $j^{\text {th }}$ iteration gives
\begin{equation}
\mathbb{E}\left[\mu_j\right]=\mathbb{E}\left[\operatorname{Tr}\left[(X \otimes \widetilde{O}) \rho^{\prime}\right]\right]=\frac{1}{\|c\|_1^2} \operatorname{Tr}\left[\widetilde{O} \left(\sum_j c_j U_j\right) \widetilde{\rho_0} \left(\sum_j c_j U_j\right)^{\dagger}\right]=\frac{1}{\|c\|_1^2} \operatorname{Tr}\left[O g(A) \rho_0 g(A)^{\dagger}\right].
\end{equation}
The Hoeffding's inequality implies that, since each random variable lies in the range $-\|O\|\|c\|_1^2 \leq\|c\|_1^2 \mu_j \leq+\|O\|\|c\|_1^2$, 
\begin{equation}
\operatorname{Pr}\left[\left|\mu-\operatorname{Tr}\left[O g(A) \rho_0 g(A)^{\dagger}\right]\right| \geq \epsilon / 2\right] \leq 2 \exp \left[-\frac{T \epsilon^2}{8\|c\|_1^4\|O\|^2}\right]. 
\end{equation}
This shows that for
\begin{equation}
    T \geq \frac{8\|O\|^2 \ln (2 / \xi_2)\|c\|_1^4}{\epsilon^2}, 
\end{equation}
\begin{equation}
\label{equation:mu-ogrhog}
    \left|\mu-\operatorname{Tr}\left[O g(A) \rho_0 g(A)^{\dagger}\right]\right| \leq \epsilon / 2, 
\end{equation}
with probability at least $1-\xi_2$. Note that $\|f(A)-g(A)\| \leq \xi_1$. For any such operators that are at most $\xi_1$-separated, we can use Theorem \ref{theorem:accuracy_of_observation} to obtain
\begin{equation}
\label{equation:ofrhof-ogrhog}
\left|\operatorname{Tr}\left[O f(A) \rho_0 f(A)^{\dagger}\right]-\operatorname{Tr}\left[O g(A) \rho_0 g(A)^{\dagger}\right]\right| \leq 3\|O\|\|f(A)\| \xi_1 \leq \epsilon / 2,
\end{equation}
for $\xi_1$ upper bounded as in the statement of this theorem. So, combining \cref{equation:mu-ogrhog} and \cref{equation:ofrhof-ogrhog} we have
\begin{equation}
    \left|\mu-\operatorname{Tr}\left[O f(A) \rho_0 f(A)^{\dagger}\right]\right| \leq \epsilon,
\end{equation}
which completes the proof.
\end{proof}

\end{document}